\numberwithin{equation}{section}
\theoremstyle{plain}
\newtheorem{theorem}{Theorem}[section]
\newtheorem{corollary}[theorem]{Corollary}
\newtheorem{proposition}[theorem]{Proposition}
\theoremstyle{definition}
\newtheorem{definition}[theorem]{Definition}
\newtheorem{example}[theorem]{Example}
\theoremstyle{remark}
\newtheorem{remark}[theorem]{Remark}
\newcommand{\cC}{\mathcal{C}}
\newcommand{\cL}{\mathcal{L}}
\newcommand{\cR}{\mathcal{R}}
\newcommand{\R}{\mathbb{R}}
\begin{document}
\title[On the Regulary of  Reaction Systems]
  {On the Regulary of  Reaction Systems}
\author{Chuang Xu}
\address{Department of Mathematics\\
University of Hawai'i at M\={a}noa, Honolulu, Hawai'i\\
96822, USA
}
\email{chuangxu@hawaii.edu}

\subjclass[2020]{}

\date{\today}

\noindent

\begin{abstract}
Reaction networks have been widely used as generic models in diverse areas of applied sciences, such as biology, chemistry, ecology, epidemiology, and computer science.  A reaction network incorporating noisy effects is modeled as a continuous time Markov chain (CTMC) and is called a \emph{stochastic reaction system}. {In contrast, the mean field limit of a sequence of volume-scaled stochastic reaction systems as the volume tends to infinity is modeled as an ordinary differential equation (ODE) and is called a \emph{deterministic reaction system}. \emph{Non-explosivity} of CTMCs and global existence of solutions of ODEs capture the regularity of respective dynamical processes. In this paper, we study the regularity of reaction systems, in both stochastic and deterministic senses.} 
{By constructing a simple linear Lyapunov function, we obtain the regularity in both sense for a class of   
reaction systems in terms of a simple checkable condition. As an application, we prove that (i) every second-order endotactic  mass-action system is regular, and hence (ii) every bimolecular weakly reversible mass-action system is regular.} 
We apply our results to diverse models in biochemistry, epidemiology, ecology, synthetic biology, and {natural computing}  in the literature.
\end{abstract}

\keywords{{Weakly reversible reaction networks; strongly endotactic reaction networks; explosivity; blow up, stationary distribution; positive recurrence}}

\maketitle

\section{Introduction}

Reaction networks are widely used in modeling diverse  science phenomena \cite{F19}. The original motivation seems to come from biochemistry, and hence the mathematical theory of reaction networks is conventionally referred to as Chemical Reaction Network Theory (CRNT) \cite{F19}. When the molecule counts of species are low, reaction networks are modeled as stochastic systems in terms of continuous time Markov chains (CTMCs) \cite{AK11}.  In contrast, when species are abundant \cite{AK11}, reaction networks are modeled as deterministic systems in terms of ordinary differential equations (ODEs), as mean field approximations of stochastic systems as the \emph{volume} of the reactor tends to infinity.

In CRNT, a particular focus is given to \emph{bimolecular} (the total number of molecule counts in each reactant) reaction networks where \emph{molecularity} 
of each reaction  is no more than two \cite{PKC12,ACK20}. This is because,  {for instance, in biochemistry,  it is in practice} less likely for more than two molecules to bind together by collision to form a complex as the reaction may need much more activation energy to proceed over a chemical potential well   \cite[Chapter~14]{C06}. For that reason, for any reaction, a product with more than two molecules may be assumed to first dissociate spontaneously thanks to a large amount of internal energy before stabilizing by collision. Based on this consideration,  reactions of bigger molecularities are usually interpreted as a sequence of bimolecular reactions that proceed in multiple steps \cite{FMKT06}, in analogy to {the binary representation of numbers}.  This analogy might explain why the term ``bimolecular reaction'' may be used interchangeably with \emph{``binary reaction''} in the literature of CRNT \cite{AK18}.

{In addition, certain type of reversibility or balancing property is desirable property for a reaction network and widely believed to be a practical assumption. For instance, if the directed graph (called the reaction graph) representing a reaction network consists of union of disjoint strongly connected components, then the reaction network is weakly reversible. Among the main results of this paper, we show that every bimolecular weakly reversible stochastic mass-action system is regular (Theorem~\ref{thm:endotactic-regularity}(iv)), in the sense that the stochastic system is non-explosive and the deterministic system has a global solution.} 

{A wider class of reaction networks than weakly reversible reaction networks are \emph{endotactic reaction networks}. Endotacticity is a property for embedded graphs \cite{CNP13}, which was originally introduced in \cite{CNP13}. It is used to study \emph{permanence} of reaction networks in the deterministic regime, which means the associated ODE has a compact global attractor in the interior of the its invariant subspace. It is believed that endotacticity of the reaction graph implies dissipativity of the associated reaction system, where dissipativity refers to the energy of the system, i.e., the energy of the system dissipates as the system evolves \cite{W72}.} {Although}  endotacticity and strong endotacticity (see Section~\ref{sect:Endotactic_system} for their definitions) have been shown to be sufficient for the permanence of deterministic mass-action reaction systems  \cite{P12,CNP13,GMS14,X24a} in two dimensions and in arbitrarily finite dimensions, respectively, they are in general \emph{insufficient} for the regularity in the \emph{stochastic} regime, as evidenced by counterexamples in terms of two-species stochastic mass-action systems \cite{ACKN20}. Hence it is interesting to know if other factors (e.g., the \emph{order}, the maximal molecularity of all reactions for stochastic mass-action systems) of reaction networks might jointly play a decisive role with the endotacticity in determining the regularity of endotactic mass-action systems (particularly in the stochastic regime). In this paper, using a simple linear Lyapunov function in both deterministic and stochastic sense {(in the sense of the function $V$ in Proposition~\ref{prop:regularity})}, 
we show that a class of stochastic reaction systems, including \emph{second-order} endotactic stochastic mass-action systems as a subset, are regular (Theorem~\ref{thm:regularity} and Theorem~\ref{thm:regularity-ODE}). 
{Hence, as a consequence, regarding the regularity, we identify the  consistency  between the stochastic and the deterministic modeling for second-order endotactic as well as first-order mass-action  reaction systems (Theorem~\ref{thm:1st-order-regularity} and Theorem~\ref{thm:endotactic-regularity}), besides complex balanced mass-action reaction systems (Remark~\ref{re:comparison-blow-up-versus-explosivity-for-complex-balanced-systems}).}

\section{Preliminaries}\label{sect:preliminaries}

\subsection*{Notation}
Let $\mathbb{R}$ be the set of real numbers {and $\mathbb{R}_+$ the set of non-negative real numbers.} 
{Let $\mathbb{Z}$, $\mathbb{N}_0$, and $\mathbb{N}$ denote the set of integers, non-negative integers, and positive integers, respectively.} Let $\{e_i\}_{i=1}^d$ be the standard basis of $\R^d$ and $\mathbf{1}_d=\sum_{i=1}^de_i$. We simply write $\mathbf{1}$ short for $\mathbf{1}_d$ when $d$ is deemphasized and clear from the context. We denote $\|x\|_1$ the $\ell_1$-norm of a vector $x\in\R^d$. {For $x,y\in\mathbb{N}^d_0$,  $x^{\underline{y}}=\prod_{i=1}^dx_i^{\underline{y_i}}$ for $x\ge y$ (coordinate-wise);  $x^{\underline{y}}=0$ otherwise, and $a^{\underline{b}}=\frac{a!}{(a-b)!}$ is the falling factorial for $a,b\in\mathbb{N}_0$ and $a\ge b$. Similarly, for $x,y\in\mathbb{R}^d_+$, $x^{y}=\prod_{i=1}^dx_i^{y_i}$ for all $x\in\mathbb{R}^d_+$ with the convention $0^0=1$.}

\subsection*{Reaction networks}
A reaction network is a triple of sets: a set of species $\mathcal{S}$, a set of complexes $\cC$, and a set of reactions $\cR$. Species are barely symbols, which are abstracted from chemical species. Each complex is a linear combination of species with non-negative integer coefficients (called \emph{stoichiometric coefficients}), and $\cR$ is a relation on $\cC$ with each reaction denoted $y\ce{->}y'$, where the complexes $y$ (called the \emph{reactant}) and $y'$ (the \emph{product}) are \emph{distinct}, and $y'-y$ the \emph{reaction vector}.
Hence a reaction network of $d$ species can be represented as a simple directed graph $(\cC,\cR)$ without \emph{isolated} vertices (vertices of zero in-degree and zero out-degree), called \emph{reaction graph}, which can be embedded in $\mathbb{N}^d_0$.  Every complex is identified with a vector of its stoichiometric coefficients in $\mathbb{N}^d_0$, e.g., $S_1+2S_2$ is identified with $e_1+2e_2$, and thus  both notations will be interchangeably used to denote a complex throughout. A reaction network is \emph{weakly reversible} if its reaction graph is a union of pairwise disjoint strongly connected components.
Let $\cC_+$ be the set of reactants, which usually determines the \emph{kinetics} of a reaction network of types of interest, including the \emph{mass-action kinetics} defined below. Throughout this paper, we assume by convention that \emph{every species appears in at least one complex}.  Hence the three sets constituting a reaction network are all known from the set of reactions. For this reason, {hereafter we} will simply use $\cR$ to represent a reaction network consisting of reactions in $\cR$. The linear span of the set of reaction vectors is called the \emph{stoichiometric subspace} of the reaction network. The dimension of the stoichiometric subspace of a reaction network is {called the \emph{dimension of the reaction network}}. {A positive vector is} a \emph{conservation law}  \cite{MFS22} of a reaction network if it is in the orthogonal complement of the stoichiometric subspace of the reaction network. {A reaction network having a conservation law is called \emph{conservative}.}

{A \emph{kinetics} of a reaction network $\cR$ is a  collection of \emph{propensity functions} $\Lambda=\{\lambda_{y\to y'}\colon \mathcal{D}\to\mathbb{R}_+\colon y\to y'\in\cR\}$,
  each of which corresponds to one reaction in the network, and the domain $\mathcal{D} = \mathbb{R}^d_+$ or $\mathcal{D} = \mathbb{N}^d_0$. A reaction $y\ce{->}y'\in\cR$ is \emph{active} in a state $x\in\mathcal{D}$ if $\lambda_{y\to y'}(x)>0$, and is \emph{inactive}  otherwise. Furthermore, a reaction is \emph{active}/\emph{inactive} in a set if it is active/inactive in every state in the set. If $\mathcal{D} = \mathbb{N}^d_0$, {then} $\Lambda$ is a \emph{stochastic}  kinetics with each propensity function characterizing how \emph{likely} a reaction may fire; if $\mathcal{D} = \mathbb{R}^d_+$, {then} $\Lambda$ is a \emph{deterministic}  kinetics with each propensity function characterizing how \emph{fast} a reaction fires. A reaction network {$\mathcal{R}$} endowed with a stochastic kinetics, denoted as a pair $(\cR,\Lambda)$, is called a \emph{stochastic reaction  system} (SRS). A reaction network {$\mathcal{R}$} endowed with a deterministic kinetics, denoted as a pair $(\cR,\Lambda)$, is called a \emph{deterministic reaction  system}. 
Whether $(\cR,\Lambda)$ refers to an SRS or a deterministic reaction  system will be clear from the context. For a majority of kinetics of interest in CRNT,  propensity function of each reaction in a reaction network only depends on the reactant of the reaction. One particular  kinetics is the \emph{mass-action kinetics} defined as follows.} 
 
{Given the reaction network $\cR$ of stochastic mass-action kinetics, 
 for every $y\to y'\in\cR$, {its propensity function is given by}
 \begin{equation}\label{eq:SMAK}
 \lambda_{y\to y'}(x)=\kappa_{y\to y'}x^{\underline{y}},\quad x\in\mathbb{N}^d_0,
 \end{equation}
where $\kappa_{y\to y'}$ is the \emph{reaction rate constant} of reaction $y \ce{->} y'$.  A reaction system of stochastic mass-action kinetics is called a \emph{stochastic mass-action system}. 
}

{Analogously, given the reaction network $\cR$ of deterministic mass-action kinetics,
 for every $y\to y'\in\cR$, {its propensity function is given by}
 \begin{equation}\label{eq:DMAK}
 \lambda_{y\to y'}(x)=\kappa_{y\to y'}x^{y},\quad x\in\mathbb{R}^d_+,
 \end{equation}
where $\kappa_{y\to y'}$ is again the reaction rate constant. A  reaction system of deterministic mass-action kinetics is called a \emph{deterministic mass-action system}. 
}

{Saving the symbol $\Lambda$ of kinetics, a   (stochastic or deterministic) mass-action system can be identified with a
 \emph{weighted} directed graph $\cR$ embedded in $\mathbb{N}^d_0$, where the positive weight assigned to each edge in the graph is the reaction rate constant associated with that reaction in the reaction network. Hereafter, we 
 do not specify ``stochastic'' or ``deterministic'' when {either} kinetics does not apply or there is no ambiguity arising from the context.}

For every reaction $y\to y'\in\cR$, the sum of stoichiometric coefficients of the reactant (i.e., $\ell_1$-norm of the reactant $\|y\|_1\coloneqq\sum_{j=1}^dy_j$) is called the \emph{molecularity} of the reaction. If the molecularity of every reaction is no more than one/two/three, then the reaction network is called \emph{unimolecular}/\emph{bimolecular}/\emph{termolecular}, respectively.  {Under mass-action kinetics, the sum of exponents in the propensity function of a reaction is referred to as the \emph{order} of the reaction, and hence it coincides with the molecularity of that reaction. The maximal order of all reactions of a mass-action system $\cR$ is called the \emph{order} of $\cR$. For instance, second-order mass-action systems are bimolecular,  and \emph{termolecular} mass-action systems are of third order. 
 {Although} there is no distinction between order and molecularity of a reaction for a mass-action system, in general they may \emph{not} coincide, since \emph{molecularity applies to the molecular mechanism of an elementary step of a reaction while in contrast, order refers to the empirical rate law of a reaction} \cite{M67}.}

\section{{Regularity of reaction systems}}
\subsection*{Stochastic reaction systems}
A stochastic reaction system $(\cR,\Lambda)$ can be modeled {as} a stochastic differential equation \cite{AK11}:
\begin{equation}\label{eq:SDE-representation}
X_t=X_0+\sum_{y\to y'\in\mathcal{R}}Y_{y\to y'}\Bigl(\int_0^t\lambda_{y\to y'}(X_s)ds\Bigr)
\end{equation}
where $X_t$ is the vector of molecule counts of the $d$ species at time $t$, $Y_{y\to y'}$ are independent unit Poisson processes associated with each reaction in the SRS, $X_0\in\mathbb{N}^d_0$ is the deterministic initial vector of species counts, and the propensity functions $\lambda_{y\to y'}$ are defined {as} in \eqref{eq:SMAK}. Note that the \emph{independence} of these Poisson processes is a consequence of the assumption that \emph{reactions cannot fire simultaneously} (with a positive probability). Hence $X_t$ is a CTMC on a state space as a subset of $\mathbb{N}^d_0$ \cite{EK09} with the extended generator
\[\mathcal{L}_{\cR} f(x) = \sum_{y\to y'\in\mathcal{R}}\lambda_{y\to y'}(x) (f(x+y'-y) - f(x)),\]
for every real-valued function $f$ defined on $\mathbb{N}^d_0$.

{Let $X_t$ be the {unique}\footnote{{The  unique existence of a \emph{local} solution to \eqref{eq:SDE-representation} is ensured provided $\sum_{y\to y'\in\cR}\lambda_{y\to y'}(x) < \infty$ for all $x\in \mathbb{N}^d_0$ \cite[Section~2.2]{AK11}, which always holds since there are finitely many reactions in $\cR$ and propensity functions are real-valued.}} solution to \eqref{eq:SDE-representation} subject to the initial state $X_0\in\mathbb{N}^d_0$. We say $X_t$ \emph{explodes} at $X_0$ if $X_t$ jumps infinitely many times within a finite time interval with a positive probability. An SRS $(\cR,\Lambda)$ is \emph{explosive} if $X_t$ explodes at some $X_0\in\mathbb{N}^d_0$, and is \emph{regular} (or  \emph{non-explosive}) otherwise. {Regularity of a {deterministic reaction system} can be rephrased as global existence of solutions to \eqref{eq:SDE-representation} \cite{AK11}, a property that can be determined by 
the $Q$-matrix of {the Kolmogorov forward equation (or so-called \emph{chemical master equation}) for} the underlying CTMC $X_t$ \cite[Corollary~2.7.3]{N98}.  Hence the set $\Lambda$ of propensity functions uniquely determines the regularity of $(\cR,\Lambda)$.} 

{Similarly, an SRS is \emph{transient}/\emph{null recurrent}/\emph{positive recurrent} in a communicating class $\Gamma$ of $\mathbb{N}^d_0$ if every state in $\Gamma$ is transient/null recurrent/positive recurrent for the underlying CTMC, respectively. Note that these three properties are \emph{class} properties of CTMCs \cite[Theorem~1.5.4,Theorem~3.4.3,Theorem~3.5.3]{N98}. 
Furthermore, we say an SRS is \emph{transient}/{\emph{null recurrent}}/\emph{positive recurrent} if it is  transient/{null recurrent}/positive recurrent in every communicating class, respectively.}

{
\subsection*{Deterministic reaction systems}
A deterministic reaction system $(\cR,\Lambda)$ can be modeled in terms of an autonomous ODE \cite{AK11}, which is the \emph{mean field limit} of a net of volume-scaled SRS as the volume tends to infinity \cite{K70,K71}:
\begin{equation}\label{eq:ODE}
\dot{x}(t) = \sum_{y\to y'\in\mathcal{R}}\lambda_{y\to y'}(x(t))(y'-y),
\end{equation}
}
{where the {unique}\footnote{{The unique existence of a local solution to \eqref{eq:ODE} is ascertained by local Lipschitz continuity of propensity functions, which is assumed throughout; in particular, it holds if $\cR$ is a deterministic mass-action system.}} solution $x(t)$ to \eqref{eq:ODE} subject to the initial condition $x_0\in\mathbb{R}^d_+$ is the vector of concentrations of species at time $t$. We say the solution $x(t)$ \emph{blows up in finite time} at $x_0$ if there exists $0<T<\infty$ such that $\lim_{t\uparrow T}\|x(t)\|_1 = \infty$; otherwise, we say $x(t)$ \emph{globally exists} for all $t>0$. We say \eqref{eq:ODE} is \emph{regular} if its solution globally exists for all initial conditions.}

{
\begin{definition}
Given a reaction network $\mathcal{R}$, we say $\mathcal{R}$ is  a \emph{regular mass-action reaction system} if it is regular both in the stochastic sense (when it is endowed with stochastic mass-action kinetics) and in the deterministic sense (when it is endowed with deterministic mass-action kinetics). We say $\mathcal{R}$ is  an \emph{irregular mass-action reaction system} if it is neither regular in the (aforementioned) stochastic sense nor regular in the (aforementioned) deterministic sense.
\end{definition}
}

{It is natural to ask for a given (mass-action) reaction network if the regularity 
of the ODE model and 
that of the CTMC model always agree. 
The examples below show the two types of regularity  in general may \emph{not} be consistent .
\begin{example}\label{ex:10th-order-explosive-strongly-endotactic-RN}
Consider the following two-species reaction network \cite[Example~3.2]{ACKN20}:
\[0\ce{->[\kappa_1]} 2S_1\ce{->[\kappa_2]} 4S_1+S_2\ce{->[\kappa_3]} 6S_1+4S_2 \ce{->[\kappa_4]} 3S_1\]
It is easy to verify by \cite[Proposition~4.1]{CNP13} that this reaction network is endotactic. Hence endowed with deterministic mass-action kinetics, this reaction system is \emph{permanent} (in particular, with bounded trajectories)  regardless of the reaction rate constants, and hence is regular \cite[Theorem~3.1]{CNP13}.  In contrast, the corresponding stochastic mass-action reaction system is explosive \cite{ACKN20}.
\end{example}
The above example does showcase that given a mass-action reaction system, the stochastic model may  explode with a positive probability while the deterministic model may be regular. 
}

{Conversely, a deterministic reaction system may blow up in finite time in spite of the stochastic model being  regular.
\begin{example}\label{ex:stochastic-regular-deterministic-irregular}
Consider the following one-species reaction network:
$$\cR\colon 2S_1\ce{<=>[\kappa_1][\kappa_1]}3S_1\ce{->[\kappa_1]}4S_1$$
By \cite[Theorem~4.4(i)]{WX20}, the associated stochastic mass-action reaction system is regular regardless of the  positive rate constant $\kappa_1$. Nevertheless, the ODE associated with the deterministic mass-action system is given by \[\dot{x} = \kappa_1 x^2,\] with an explicit solution $x(t) = (x_0^{-1} - \kappa_1 t)^{-1}$ subject to the initial condition $x(0) = x_0$. Hence the ODE blows up in finite time $T = (\kappa_1 x_0)^{-1}$. 
\end{example}}


{Before presenting the main results on the regularity of reaction systems, it is noteworthy that second-order mass-action systems are \emph{not} always regular. 
\begin{example}\label{ex:explosive-1-species}
As a variant of Example~\ref{ex:stochastic-regular-deterministic-irregular}, consider the following second-order reaction network
\[\mathcal{R} \colon 2S_1\ce{->[\kappa_1]}3S_1\]
As a deterministic mass-action system, $\mathcal{R}$ is dynamically equivalent to the system (i.e., with the same associated ODE) in Example~\ref{ex:stochastic-regular-deterministic-irregular} (assuming the two $\kappa_1$ in respective reaction networks are identical), and hence blows up in finite time. In contrast to the stochastic mass-action system in Example~\ref{ex:stochastic-regular-deterministic-irregular}, the stochastic mass-action system  is explosive \cite[Theorem~4.4(i)]{WX20}, which can be used to model a pure birth process with a quadratic growth rate. 
\end{example}
Even in the light of Example~\ref{ex:explosive-1-species},  one might still optimistically anticipate that all  CTMCs proposed in the literature modeling world-real phenomena are regular. Indeed, this is  \emph{not} true. For instance, the so-called ``runaway model'' \cite{BA16}\---a Verhulst logistic model with
bursty reproduction (Example~\ref{ex:run-away}),  
turns out to be \emph{explosive} \cite{XHW23}.}


{Before proceeding to the main results, let us recall a well-known drift criterion for the regularity of CTMCs.
\begin{proposition}\label{prop:regularity}\cite[Theorem~2.1]{MT93}
Let $X_t$ be a CTMC on an infinite state space $\mathcal{X}\subseteq\mathbb{Z}^d$. Assume there exist constants $C_1,C_2\ge0$, and a non-negative norm-like\footnote{{$V\colon \mathbb{R}^d\to \mathbb{R}_+$ is a \emph{norm-like} function if $\lim_{\|x\|_1\to\infty}V(x)=+\infty$.}} function $V\colon \mathcal{X}\to\mathbb{R}$ such that  
\[\mathcal{L}V(x)\le C_1V(x)+C_2,\quad \forall x\in\mathcal{X},\]
{where $\mathcal{L}$ is the extended generator of $X_t$.} Then $X_t$ is regular on $\mathcal{X}$. Such a function $V$ is called a \emph{Lyapunov function for the regularity of $X_t$}. 
\end{proposition}
}

Next, we identify a class of regular SRS in terms of a simple condition.

For every {positive vector $u\in\mathbb{R}^d$}, let
$$\cR_{u,+}\coloneqq\{y\ce{->} y'\in\cR \colon y'>_uy\}$$ be the (possibly empty) subset of reactions as bona fide vectors increasing along $u$-direction. 

\begin{theorem}\label{thm:regularity}
Let $(\cR,\Lambda)$ be an SRS with $\Lambda=\{\lambda_{y\to y'}\colon y\ce{->}y'\in\cR\}$. 
If there exists a positive vector {$u\in\mathbb{R}^d$} 
such that
\begin{equation}\label{eq:upper-bound}
\sup_{y\to y'\in\cR_{u,+}}\sup_{x\in\mathbb{N}^d\setminus\{0\}}\frac{\lambda_{y\to y'}(x)}{\|x\|_1} {< \infty,}
\end{equation}
then $(\cR,\Lambda)$ is regular. In particular, $(\cR,\Lambda)$ is regular if $\cR_{u,+}=\emptyset$ for some {positive vector $u$.} 
\end{theorem}
{Roughly speaking, an SRS is regular if there exists a positive direction such that all reactions that increase along this direction have non-strictly sub-linear propensity functions.}
\begin{proof}
Let $V(x)=x^T\cdot u$. It follows from \eqref{eq:upper-bound} that
\begin{align*}
\mathcal{L}_{\mathcal{R}}V(x)=&\sum_{y\to y'\in\cR}\lambda_{y\to y'}(x) \Bigl((x+y'-y)^T\cdot u)-x^T\cdot u\Bigr)\\
\le&\sum_{y\to y'\in\cR}\lambda_{y\to y'}(x)(y'-y)^T\cdot u\\
\le&\sum_{y\to y'\in\cR_{u,+}}\lambda_{y\to y'}(x)(y'-y)^T\cdot u\le \widetilde{C}V(x)
\end{align*}
for some constant $\widetilde{C}\in\mathbb{R}$. Hence,  by Proposition~\ref{prop:regularity},
$(\cR,\Lambda)$ is regular.

{In particular, if $\cR_{u,+}=\emptyset$, then $$\sup_{y\to y'\in\cR_{u,+}}\sup_{x\in\mathbb{N}^d\setminus\{0\}}\frac{\lambda_{y\to y'}(x)}{\|x\|_1}=0,$$ which immediately yields that $(\cR,\Lambda)$ is  regular.}
\end{proof}

{\begin{theorem}
\label{thm:regularity-ODE}
\!\!\! Let $(\cR,\Lambda)$ be a deterministic reaction system with  $\Lambda=\{\lambda_{y\to y'}\colon$ $y\ce{->}y'\in\cR\}$.
If there exists a positive vector {$u\in\mathbb{R}^d$} 
 such that
\begin{equation}\label{eq:upper-bound-ODE}
\sup_{y\to y'\in\cR_{u,+}}\sup_{x\in\mathbb{R}^d_+\setminus\{0\}}\frac{\lambda_{y\to y'}(x)}{\|x\|_1} < \infty,
\end{equation}
then $(\cR,\Lambda)$ is regular. In particular, $(\cR,\Lambda)$ is regular if $\cR_{u,+}=\emptyset$ for some {positive vector $u$.} 
\end{theorem}}

\begin{proof}
{Let $V(x)=x^T\cdot u$. It follows from \eqref{eq:upper-bound-ODE} that
\begin{align*}
\frac{\dot{V}(x(t))}{dt}=&\sum_{y\to y'\in\cR}\lambda_{y\to y'}(x(t)) \Bigl((x(t)+y'-y)^T\cdot u)-x(t)^T\cdot u\Bigr)\\
\le&\sum_{y\to y'\in\cR}\lambda_{y\to y'}(x(t))(y'-y)^T\cdot u\\
\le&\sum_{y\to y'\in\cR_{u,+}}\lambda_{y\to y'}(x(t))(y'-y)^T\cdot u\le \widetilde{C}V(x(t))
\end{align*}
for some constant $\widetilde{C}\in\mathbb{R}$, which further implies by Gronwall inequality that $$x(t) \le x_0 e^{\widetilde{C} t},$$ where $x_0$ is the initial concentration vector. Hence $(\cR,\Lambda)$ is regular.
}

{With a similar argument as in the proof of Theorem~\ref{thm:regularity}, one can show that $(\cR,\Lambda)$ is regular if $\cR_{u,+}=\emptyset$.}
\end{proof}
\begin{remark}
    {Regularity of \emph{asiphonic strongly endotactic} mass-action systems was obtained in \cite[Theorem~4.11 and Remark~3.8] {ADE18b} (it is noteworthy that ``strong endotacticity'' therein is weaker than the concept in this paper).}
\end{remark}

\begin{theorem}\label{thm:1st-order-regularity}
{For a given reaction system, assume the sub-system consisting of all unimolecular reactions is a mass-action system. The reaction system is regular if one of the {three} conditions holds:
\begin{itemize}
    \item[(i)] all reactions that increase the total molecule counts are unimolecular;
    \item[(ii)] there exists a species that decreases in all reactions of more than one molecularities.
    \item[(iii)] {there exists a conservative sub reaction network whose complement is a unimolecular reaction network.}
\end{itemize} In particular, every first-order mass-action system is regular.}
\end{theorem}
\begin{proof}
{The proof of (i) follows from Theorem~\ref{thm:regularity} and Theorem~\ref{thm:regularity-ODE} with $u=\mathbf{1}$. For (ii)), let $S_i$ be the one that decreases in all reactions of higher than one. One can  choose $u>0$ with $u_i$ large enough such that  $\mathcal{R}_{u,+}$ is a unimolecular reaction network fulfilling \eqref{eq:upper-bound-ODE}. Then the conclusion follows from Theorem~\ref{thm:regularity} and Theorem~\ref{thm:regularity-ODE}.} {For case (iii), let $u$ be a conservation law of the sub reaction network. Then $\cR_{u,+}$ is a unimolecular reaction network. Hence Theorem~\ref{thm:regularity} and Theorem~\ref{thm:regularity-ODE} apply with vector $u$.}

\end{proof}

{Confined to mass-action kinetics, we have the following alternative interpretation of Theorem~\ref{thm:1st-order-regularity}(i) in terms of an auxiliary reaction network.}

{Given a reaction network $\cR$ \emph{without conservation law $\mathbf{1}$}, remove all reactions $y\to y'\in\cR$ conserving the total molecules (i.e., $\|y\|_1=\|y'\|_1$), one obtains the following (\emph{non-empty}) auxiliary one-species reaction network $$\cR_*\coloneqq\{\|y\|_1S\ce{->[\underset{z\to z'\in\cR\colon \|z\|_1=\|y\|_1,\ \|z'\|_1=\|y'\|_1}{\sum}\kappa_{z\to z'}]} \|y'\|_1S\colon y\to y'\in\cR,\ \|y\|_1\neq\|y'\|_1\}$$}  

\begin{corollary}\label{cor:one-species-interpretation}
{Let $\cR$ be a reaction network. Assume $\cR$ has no conservation law $\mathbf{1}$. Let $\cR_*$ be its auxiliary one-species reaction network. For any reaction $mS\to nS\in\cR_*$, if $m<n$ implies $m\le1$, then $\cR$ as a mass-action system is regular.}
\end{corollary}
\begin{proof}
    {Since for any reaction $mS\to nS\in\cR_*$, if $m<n$ implies $m\le1$, we know every reaction in $\cR$ that increases the total molecule counts has no more than one molecularity. Then the conclusion follows from Theorem~\ref{thm:1st-order-regularity}(i).}
\end{proof}


\begin{example}
    {Consider the following reaction network $$\mathcal{R}\colon S_1+S_2\ce{->}S_3\quad S_2\ce{<=>}S_3$$
    Removing the pair of reversible reactions $S_2\ce{<=>}S_3$ that conserve the total molecules, and replacing $S_1+S_2$ by $2S$ and $S_3$ by $S$ in the only remaining reaction, we obtain the following one-species auxiliary network:
    \[\mathcal{R}_*\colon 2S\ce{->}S\] 
     Since $\mathcal{R}_*$ contains no reaction $mS\to nS$ with $m<n$, by Corollary~\ref{cor:one-species-interpretation}, the associated mass-action system of $\cR$ is regular.} 
\end{example}

{Below is an example where Theorem~\ref{thm:1st-order-regularity}(i) is \emph{not} applicable; instead Theorem~\ref{thm:1st-order-regularity}(ii) applies.}

\begin{example}
    {Consider the following reaction network:
    \[\cR\colon S_1+S_2\ce{->}3S_3\quad S_2\ce{<=>}S_3\]
    Since the first reaction is bimolecular and increases the total molecule counts, \cref{thm:1st-order-regularity}(i) fails. However, since $S_1$ decreases in this only reaction of order higher than one, by \cref{thm:1st-order-regularity}(ii), the associated mass-action system is regular. 
    }
\end{example}

{Finally, we give an example where Theorem~\ref{thm:1st-order-regularity} fails to apply while Theorem~\ref{thm:regularity} and Theorem~\ref{thm:regularity-ODE} apply.}

\begin{example}
    {Consider the following reaction network:
    \[\cR\colon S_1+S_2\ce{->}2S_1+S_3\quad 2S_1\ce{->}S_2+2S_3\]
    It is easy to observe that species $S_1$ increases in the first reaction while decreases in the second reaction, and conversely, $S_2$ decreases in the first reaction while increases in the second reaction; moreover, the total molecule counts increase in both reactions. Hence neither condition in Theorem~\ref{thm:1st-order-regularity} is fulfilled. Let $u=(4,6,1)$. Then $\cR_{u,+}=\emptyset$. By Theorem~\ref{thm:regularity} and Theorem~\ref{thm:regularity-ODE}, $\cR$ is regular regardless of its kinetics.}
\end{example}

\section{{Regularity of second order endotactic mass-action systems}}\label{sect:Endotactic_system}

{Motivated by the concern on the properties of weakly reversible reaction networks and endotactic reaction networks aforementioned in the Introduction, we will apply the main results derived in the previous section to yield the regularity of these reaction networks.}

\subsection*{Endotacticity}
First, we recall the definitions of endotactic and strongly endotactic reaction networks. Both reaction networks were originally proposed \cite{CNP13,GMS14} to address  dissipativity of the {associated} deterministic reaction systems.

\begin{definition}
{Given $u\in\mathbb{R}^d$, for any $y,z\in\mathbb{R}^d$, we write $y>_uz$ or $z<_uy$ if $(y-z)^T\cdot u>0$. Similarly, we define $y=_uz$, $y\ge_uz$, and $z\le_uy$.
In this way, $u$ induces a total order on $\mathbb{R}^d$.  Two vectors $y,z\in\R$ are \emph{$u$-equal} if $y=_uz$. 
Given any subset $A\subseteq\R$, an element $y\in A$ is \emph{$u$-maximal} in $A$ if $y\ge_uz$ for all $z\in A$.} \end{definition}

Given a reaction network $\cR$ and $u\in\mathbb{R}^d$, let $\cR_u$ be the (possibly empty) set of reactions
whose reaction vectors are \emph{not} orthogonal to $u$ and $\cC_{u,+}$ be the (possibly empty) set of reactants 
in $\cR_u$.

\begin{definition}
{Given a reaction network $\cR$, let $u\in\mathbb{R}^d$.  Any reaction $y\ce{->[]} y'\in\cR$ with a $u$-maximal reactant in $\cC_{u,+}$ satisfying $y<_uy'$ is called a \emph{$u$-endotacticity-violating reaction} of $\cR$, or simply called an \emph{endotacticity-violating reaction} of $\cR$. We say $\cR$ is \emph{$u$-endotactic} if it has no $u$-endotacticity-violating reaction.  A $u$-endotactic reaction network $\cR$ is \emph{strongly $u$-endotactic} if there exists a reaction $y\to y'\in\cR$ such that $y>_uy'$ and $y$ is $u$-maximal in $\cC_+$.}
\end{definition}
\begin{definition}
{$\cR$ is \emph{lower-endotactic} (\emph{strongly lower-endotactic}, respectively) if $\cR$ is $u$-endotactic (strongly $u$-endotactic, respectively) for every $u\in\mathbb{R}^d_+$.}
\end{definition}
\begin{definition}
{$\cR$ is \emph{endotactic} (\emph{strongly endotactic}, respectively) if $\cR$ is $u$-endotactic (strongly $u$-endotactic, respectively) for every $u\in\mathbb{R}^d$. In other words, \emph{$\cR$ is endotactic if and only if $\cR$ has no endotacticity-violating reaction}.}
\end{definition}
{By definition, it is apparent that endotacticity of a reaction network is a property of the \emph{unweighted} reaction graph.}



Only based on the definition, it may turn out non-trivial to justify that a reaction network is endotactic, since literally one needs to show $u$-endotacticity for infinitely many vectors $u$. However,  for either 2-dimensional reaction networks \cite[Proposition~4.1]{CNP13} or first-order reaction networks \cite[Theorem~5.2]{X24a}, to conclude the endotacticity of a reaction network, it suffices to check $u$-endotacticity for only \emph{finitely} many vectors $u$. Reaction networks satisfying other more readily verifiable conditions purely on the reaction graph are known to be endotactic \cite[Theorem~4.4]{X24a}; for instance, \emph{every  weakly reversible reaction network is endotactic} \cite[Lemma~4.5]{CNP13}.

Since the main results of this paper impose conditions on  $u$-endotacticity for a particular vector $u$, to save some extra notations, we will omit stating the known  criteria for the endotacticity of reaction networks,  particularly for 2-dimensional or first-order reaction networks. We refer the interested reader to \cite{CNP13,GMS14,X24a} for more details.
\begin{example}\label{ex:illustration-of-u-endotacticity}
 Consider the following Brusselator reaction network \cite{F87}:
\[
\cR\colon 0\ce{<=>}S_1\ce{->} S_2\quad 2S_1+S_2\ce{->} 3S_1
\]
\begin{enumerate}
\item[(i)] Let $u=(1,1)$. Then $\cR_u=\{0\ce{<=>}S_1\}$,  and hence $\cC_{u,+}=\{0,e_1\}$. Since $e_1>_u0$, there exists no $u$-endotacticity-violating reaction. Hence $\cR$ is $u$-endotactic.
\item[(ii)] Let $v=(2,1)$. Then $\cR_{v}=\cR$ since no reaction vector is orthogonal to $v$. Moreover, $\cC_{v,+}=\{0,e_1,2e_1+e_2\}$, and $2e_1+e_2$ is the unique $v$-maximal reactant in $\cC_{v,+}$. Observe that  $2S_1+S_2\ce{->} 3S_1$ is a $v$-endotacticity-violating reaction since $2e_1+e_2<_v3e_1$, which yields that $\cR$ is {not $v$-endotactic and hence not endotactic}. In addition, it is straightforward to show that $$\cR_{v,+}=\{0\ce{->}S_1\quad 2S_1+S_2\ce{->} 3S_1\}$$
\end{enumerate}
\end{example}

\begin{theorem}\label{thm:endotactic-regularity}
Every  second-order $\mathbf{1}$-endotactic mass-action system is regular. In particular,
\begin{enumerate}
\item[(i)] every second-order lower-endotactic mass-action system is regular;
\item[(ii)] every second-order endotactic mass-action system is regular;
\item[(iii)] every second-order strongly lower-endotactic   mass-action system is regular;
\item[(iv)] every second-order strongly endotactic mass-action system is regular;
\item[(v)] every second-order weakly reversible  mass-action system is regular.
\end{enumerate}
\end{theorem}
\begin{proof}
{
Let $\cR$ be a second-order $\mathbf{1}$-endotactic mass-action system. By $\mathbf{1}$- endotacticity of $\cR$,
 for every second-order reaction $y\ce{->} y'\in\cR$, we have
 $$(y'-y)^T\cdot\mathbf{1}\le0,$$ i.e., the total molecule counts do not increase in every second-order reaction. Hence it follows from Theorem~\ref{thm:1st-order-regularity}(i) that $\cR$ is regular.} 
The {remaining} conclusions in (i)-(v) follow from {the fact} that these networks are all $\mathbf{1}$-endotactic \cite{CNP13,GMS14}.
\end{proof}
\begin{remark}\label{re:comparison-blow-up-versus-explosivity-for-complex-balanced-systems}
{Another class of regular mass-action reaction systems are \emph{complex-balanced} reaction systems. Complex-balanced reaction systems are  regular in both the  stochastic sense \cite{ACKK18} and the  deterministic sense \cite{F87} (see also \cite{CNP13}).} 
\end{remark}
\begin{remark}
{Example~\ref{ex:10th-order-explosive-strongly-endotactic-RN} 
is an explosive (strongly) endotactic stochastic mass-action reaction system of order 10.}
 It remains open \emph{what is the minimal order for an endotactic stochastic mass-action system to be explosive}.
\end{remark}
\begin{remark}
{It is known that strongly endotactic deterministic mass-action systems \cite{CNP13}, as well as 2-species endotactic deterministic mass-action systems \cite{GMS14} and first order endotactic deterministic mass-action systems \cite{X24a}, are permanent and hence are regular.}
\end{remark}

{It is noteworthy that $\mathbf{1}$-endotacticity is \emph{unnecessary}} for the regularity of second-order mass-action systems.
\begin{example}
{Consider the following reaction network as a variant of Example~\ref{ex:explosive-1-species} of possibly different reaction rate constants: $$\cR\colon S_1\ce{<=>[\kappa_{1}][\kappa_{-2}]}2S_1\ce{->[\kappa_2]}3S_1,$$ which is \emph{not} $1$-endotactic. Applying \cite[Theorem~1]{XHW23} (see also \cite[Theorem~4.4(i)]{WX20}),
a necessary and sufficient condition for the regularity of  one-dimensional CTMCs of polynomial transition rates,
we know that as a stochastic mass-action system, $\cR$  is regular if and only if $\kappa_2\le\kappa_{-2}$. Moreover, it is straightforward to show that as a deterministic mass-action system, $\mathcal{R}$ is also regular if and only if $\kappa_2\le\kappa_{-2}$. Hence $\mathcal{R}$ is a regular mass-action system if and only if $\kappa_2\le\kappa_{-2}$.}
\end{example}

For higher (than second) order {mass-action systems}, even \emph{strong} $\mathbf{1}$-endotacticity becomes insufficient for the regularity.
\begin{example}
Consider the following termolecular reaction network
\[\cR\colon 0\ce{<=>[\kappa_0][\kappa_{-1}]} S_1\quad 2S_1\ce{->[\kappa_1]}3S_1\quad 3S_2\ce{->[\kappa_2]}2S_2\]
It is readily verified that $\cR$ is strongly $(1,1)$-endotactic. Nevertheless, {for the stochastic mass-action system $\cR$, $\Gamma_i=\mathbb{N}\setminus\{1\}\times\{i\}$ for $i=0,1,2$} are the only three closed communicating classes in the ambient space $\mathbb{N}^2_0$, and the reaction $3S_2\ce{->[\kappa_2]}2S_2$ is inactive in {$\Gamma=\cup_{i=0}^2\Gamma_i$}, and hence {for a stochastic mass-action system $\cR$, its infinitesimal generator coincides with that of its sub reaction network  $$\cR'\colon 0\ce{<=>[\kappa_0][\kappa_{-1}]} S_1\quad 2S_1\ce{->[\kappa_1]}3S_1$$ confined to $\Gamma$. Note that as a one-dimensional stochastic mass-action system, $\mathcal{R}'$} is explosive  \cite[Theorem~4.4(i)]{WX20}. Hence the stochastic mass-action system $\cR$ is also explosive. {One can easily show by straightforward calculations that for $\cR$ as a deterministic mass-action system, the concentration of species $S_1$ blows up in finite time. {In sum, $\cR$ is an irregular mass-action system}.}
\end{example}

\begin{theorem}\label{thm:1-endotactic-positive-recurrent-equivalent-to-existence-of-SD}
Let $\cR$ be a second-order $\mathbf{1}$-endotactic  stochastic mass-action system.  Then $\cR$ has at most one stationary distribution in each closed communicating class.
\end{theorem}

\begin{proof}

Assume w.l.o.g. that there exists a closed communicating class $\Gamma$, and $\cR$ has a stationary distribution in $\Gamma$. Otherwise, the conclusion holds trivially. Then $X_t$ restricted to $\Gamma$ is irreducible. It follows from Theorem~\ref{thm:endotactic-regularity} that $X_t$ is regular. By \cite[Theorem~3.5.3]{N98}, a regular irreducible CTMC with a stationary distribution is positive recurrent, which implies the uniqueness of the stationary distribution in $\Gamma$.
\end{proof}
\begin{theorem}\label{thm:reducing-positive-recurrence-to-existence-of-SD}
\ Every bimolecular weakly reversible stochastic mass-action system is positive recurrent if and only if there exists a stationary distribution in every communicating class.
\end{theorem}
\begin{proof}
Note that weakly reversible reaction networks of $d$ species are \emph{essential} in the sense that every communicating class in the ambient state space $\mathbb{N}^d$ is closed \cite[Theorem~4.6]{PCK14} (see also \cite{XHW22}). Then the conclusion follows from Theorem~\ref{thm:1-endotactic-positive-recurrent-equivalent-to-existence-of-SD}, again in the light of the fact that every weakly reversible reaction network is $\mathbf{1}$-endotactic \cite[Proposition~4.1]{CNP13}.
\end{proof}

Note that \emph{$\mathbf{1}$-endotacticity is insufficient for the existence of a closed} \emph{communicating class}.
\begin{example}
{Consider the following stochastic mass-action system} $$\cR \colon S_1+S_2\ce{->} 0\ce{->} S_2$$ It is readily verified that $\cR$ is $(1,1)$-endotactic  (indeed, it is lower endotactic). Moreover, the two reactions together facilitate first exhausting species $S_1$ and then increasing the molecule counts of $S_2$ indefinitely: Every non-zero state in $\mathbb{N}^2_0$ strictly below the diagonal $\{(n,n)\}_{n\in\mathbb{N}_0}$ leads to a state on the diagonal  by possibly repeatedly applying the reaction $0\ce{->} S_2$; every state either on or strictly above the diagonal leads to a state on the boundary $\{0\}\times\mathbb{N}_0$ by  possibly repeatedly applying the reaction $S_1+S_2\ce{->} 0$; and every state on the boundary $\{0\}\times\mathbb{N}_0$ leads to all states above it by the reaction $0\ce{->} S_2$. Hence every state is in an open communicating class and $\cR$  
 {is transient \cite[Theorem~1.5.5\ and Theorem~3.4.1]{N98} in a trivial manner due to the  underlying structure of the ambient space.}
\end{example}

{Indeed,} even with the existence of a closed communicating class, $\mathbf{1}$-endotacticity turns out to be still insufficient for recurrence for second-order stochastic mass-action systems.

\begin{example}
{Consider the stochastic mass-action system}
\[\cR\colon 0\ce{<=>[\kappa_1][\kappa_2]}S_2\ce{->[\kappa_3]}2S_2\ce{<-[\kappa_4]}S_1+S_2\ce{->[\kappa_5]}S_2\]
It is straightforward to verify that $\cR$ is $(1,1)$-endotactic; moreover, $\Gamma=\{0\}\times\mathbb{N}_0$ is the unique closed communicating class, and $\{n\}\times\mathbb{N}_0$ is an open communicating class for every $n\in\mathbb{N}$. {Hence restricted to $\Gamma$, its infinitesimal generator coincides with that of the sub reaction network of $\cR$ as a  stochastic mass-action system $$\cR'\colon 0\ce{<=>[\kappa_1][\kappa_2]} S_2 \ce{->[\kappa_3]} 2S_2$$   Since $\cR'$ is transient if and only if $\kappa_3>\kappa_2$ or $\kappa_1>\kappa_2=\kappa_3$ \cite[Theorem~3(i)]{XHW23}, $\cR$ is also transient when $\kappa_3>\kappa_2$ or  $\kappa_1>\kappa_2=\kappa_3$.}
\end{example}
\begin{remark}
In \cite{AM18}, a 7th order transient strongly endotactic  stochastic mass-action system was elegantly constructed. The author is unaware of  \emph{what is the minimal order for an endotactic  stochastic mass-action system to be transient in a closed communicating class}.
\end{remark}

\begin{corollary}\label{cor:positive-recurrence}
Every  second-order $\mathbf{1}$-endotactic stochastic mass-action system is positive recurrent in a closed communicating class $\Gamma$ if and only if there exists a stationary distribution in $\Gamma$.
\end{corollary}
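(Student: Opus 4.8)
The plan is to prove the two implications of the equivalence separately, treating the existence of a stationary distribution and positive recurrence as the two directions. The non-explosivity established in Corollary~\ref{cor:endotactic-non-explosive} will supply the decisive ingredient for the substantive direction, while the other direction is a standard fact from the theory of continuous time Markov chains. Throughout, I use that $\Gamma$ being a closed communicating class makes the CTMC $X_t$ restricted to $\Gamma$ irreducible.

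First I would dispatch the forward implication. Suppose $\cR$ is positive recurrent in $\Gamma$. Since $X_t$ restricted to $\Gamma$ is irreducible and every state in $\Gamma$ is positive recurrent by hypothesis, the classical theory of CTMCs (\cite[Theorem~3.5.3]{N98}) guarantees that an irreducible positive recurrent chain admits a stationary distribution supported on $\Gamma$. Hence a stationary distribution in $\Gamma$ exists. This direction uses no structural feature of the reaction network beyond irreducibility on $\Gamma$, so it is essentially immediate.

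For the reverse implication, suppose there exists a stationary distribution in $\Gamma$; this is precisely what was already recorded in the proof of Corollary~\ref{cor:1-endotactic-positive-recurrent-equivalent-to-existence-of-SD}, and I would simply invoke that argument. Because $\cR$ is a second order $\mathbf{1}$-endotactic stochastic mass-action system, Corollary~\ref{cor:endotactic-non-explosive} shows $\cR$ is non-explosive, so $X_t$ restricted to the closed communicating class $\Gamma$ is a non-explosive irreducible CTMC possessing a stationary distribution. Applying \cite[Theorem~3.5.3]{N98} once more—a non-explosive irreducible CTMC with a stationary distribution is positive recurrent—yields that every state in $\Gamma$ is positive recurrent, i.e.\ $\cR$ is positive recurrent in $\Gamma$.

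Since both directions are assembled from standard results, there is no genuine technical obstacle; the proof is a matter of citing the right facts. The single point deserving emphasis is that non-explosivity is indispensable for the reverse implication: without it, the mere existence of a stationary distribution would not force positive recurrence, as an explosive chain may carry an invariant probability measure while failing to recur in finite expected time. It is exactly the second order $\mathbf{1}$-endotacticity hypothesis, routed through Corollary~\ref{cor:endotactic-non-explosive}, that excludes explosion and thereby closes the equivalence.
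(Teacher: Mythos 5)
Your proof is correct and follows essentially the same route as the paper: the forward direction is the standard fact that an irreducible positive recurrent CTMC has a stationary distribution, and the backward direction combines the non-explosivity from Corollary~\ref{cor:endotactic-non-explosive} with \cite[Theorem~3.5.3]{N98}. Your closing remark correctly identifies why non-explosivity is the indispensable ingredient, which is exactly the role it plays in the paper's argument.
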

\begin{proof}
The forward implication (``only if'' part) is obvious, and the backward implication (``if'' part) follows immediately from  {Theorem~\ref{thm:endotactic-regularity} and \cite[Theorem~3.5.3]{N98}.}
\end{proof}

\begin{remark}
In \cite{AM18}, a 7th order null recurrent strongly endotactic  stochastic mass-action system was also elegantly  constructed whose reaction graph is akin to that of a positive recurrent strongly endotactic stochastic mass-action system. Note that every stochastic mass-action system has a  \emph{stationary measure} in every closed communicating class \cite{WX23}. 
{The author is ignorant of} \emph{what is the minimal order for an endotactic  stochastic mass-action system to admit no stationary distribution (i.e., no finite stationary measure)} in an infinite closed communicating class.
\end{remark}

{In contrast to second-order weakly reversible stochastic mass-action systems, third-order weakly reversible stochastic mass-action systems may \emph{not} admit a linear Lyapunov function for regularity.}
\begin{example}\label{ex:linear-function-fails-for-3rd-order-WR}
Consider the following {termolecular two-species}  stochastic mass-action system 
\[\cR\colon S_1+S_2\ce{<=>[\kappa_4][\kappa_3]}2S_1+S_2\ce{<=>[\kappa_2][\kappa_1]}2S_1\] Note that $\mathbb{N}\times\mathbb{N}_0\setminus\{(1,0)\}$ is the unique non-singleton communicating class in $\mathbb{N}^2_0$, and {it} is closed.  Since $\cR$ is weakly reversible and {is of} \emph{deficiency zero}\footnote{\emph{Deficiency} is the number of complexes minus the sum of the dimension of $\cR$ and the number of non-singleton strongly connected components of $\cR$. {In} this example, it equals $3-2-1=0$.}, it is complex-balanced \cite[Theorem~4A]{H72}, and hence 
 {is a regular mass-action system by Remark~\ref{re:comparison-blow-up-versus-explosivity-for-complex-balanced-systems}}. Let $\cL_{\cR}$ be the infinitesimal generator associated with $\cR$. For every {positive vector $u\in\mathbb{R}^2$}, let
$V_u(x) =x^T\cdot u$.  Then for all large $x_1\in\mathbb{N}\setminus\{1\}$, \begin{align*}
\cL_{\cR} V_u(x_1,0) =& \kappa_1x_1(x_1-1)u_2
\end{align*}
Obviously, $V_u$ is not a Lyapunov function for the regularity  
of $\cR$ for any $u\in\mathbb{R}^2_{+}$ (note that $V_u$ is not \emph{norm-like} for any   
{$u$ on the boundary of $\mathbb{R}^2_{+}$}). Essentially, this is because the reaction $2S_1\ce{->[\kappa_1]}2S_1+S_2$ of a strictly \emph{super-linear} propensity function increases (the molecule counts of $S_2$) along any positive direction. {Analogously, one can also show that one cannot conclude the regularity of $\cR$ by a linear function as in the proof of \eqref{eq:upper-bound-ODE}.}
\end{example}
Below we provide an example in action {to demonstrate} how to apply Corollary~\ref{cor:positive-recurrence} to prove positive recurrence.
\begin{example}\label{ex:illustration-with-existence-of-SD}
Consider the following reaction network taken from \cite[Example~1]{HHKK23}
\[\cR\colon 2S_3\ce{<=>[\kappa_1][\kappa_2]}0\ce{->[\kappa_3]} S_1\ce{->[\kappa_4]} S_2\quad 2S_1\ce{->[\kappa_5]} S_1+S_2\quad 2S_2\ce{->[\kappa_6]} S_2+S_3\]
{As a second-order stochastic mass-action system, $\cR$  is not $(-1,-1,0)$-endotactic, and \emph{there exists no directed path in the reaction graph from $2S_1$ or $2S_2$ to either $0$ or $S_i$ for any $i=1,2,3$}} (\cite[{Theorem~2, condition 2}]{AK18}).
Hence sufficient conditions for positive recurrence in terms of the reaction graph \cite{AK18,ACKK18,ACK20,ACKN20} seem arguably  {inapplicable} to $\cR$, at least in a straightforward manner.
However, it is readily verified that $$\cR_{(1,1,1)}=\{2S_3\ce{<=>[\kappa_1][\kappa_2]}0\ce{->[\kappa_3]} S_1\},$$ and hence $\cR$ is $(1,1,1)$-endotactic. Moreover, the CTMC associated with $\cR$ is irreducible on $\mathbb{N}^3_0$ with a stationary distribution (see \cite{HHKK23} for the closed form of the stationary distribution). By Corollary~\ref{cor:positive-recurrence}, $\cR$ is positive recurrent.
\end{example}

{Although endotactic reaction networks are a large class of reaction networks which are regular, $\mathbf{1}$-endotacticity is  \emph{unnecessary} for regularity.}

\begin{example}
    {Consider the following second order one-species mass-action system:
    \[\mathcal{R}\colon S\ce{<=>[$\kappa_0$][$\kappa_1$]}2S\ce{->[$\kappa_2$]}3S\]
    $\mathcal{R}$ is not $1$-endotactic, while it is non-explosive for $\kappa_2 < \kappa_1$ when modeled stochastically by \cite[Theorem~1]{XHW23}; moreover, it is easy to verify that it is regular under this condition when modeled deterministically. Hence $\mathcal{R}$ is regular for $\kappa_2 < \kappa_1$.}
\end{example}


\section{Applications}\label{sect:appl}

{In this section, we will apply the main results in the previous two sections to obtain regularity of various models in the literature stemming from real-world applications.}

\subsection*{Preparations}
\emph{Open reaction networks} in the sense of Feinberg \cite{F87} refer to those that contain ``pseudo-reactions'' (i.e., reactions where either the reactant or the product is the zero complex) to account for the case where mass exchange with the ambient is possible. Pseudo-reactions of a zero reactant account for reactions where certain species in the reactor vessel is produced in the feed stream at constant molar concentrations. Similarly, pseudo-reactions of a zero product model removal of the species in the reactant from the reactor vessel by the effluent stream. \emph{Closed reaction systems}, as  opposed to \emph{open reaction systems}, assume  conservation of mass in terms of the existence of a conservation law for the reaction network \cite{F19,MFS22}. Since closed reaction systems, when modeled stochastically, are {\emph{finite} state CTMCs} owing to the existence of a conservation law, whose regularity trivially holds. To avoid this triviality while {demonstrating} the applicability of the results established in the previous section, we will apply our results to various biological models that can be represented {by \emph{open reaction networks}}. Throughout this section,  to save the effort in notation, \emph{the same symbol $\cR$ is slightly abused to only refer to that specific reaction network within each subsection.}

Below, models with an \emph{italic subsection title} were originally proposed as  \emph{deterministic} reaction systems in the references, where reactions are assumed to occur in an isothermal homogeneous continuous
flow stirred tank reactor (CFSTR), and the biological appropriateness might have been justified therein.
It is noteworthy that in contrast, we simply take the reaction mechanism as a reaction network{; in this way, we can} derive a stochastic analogue, by taking into account the randomness induced by possible \emph{low} abundance of species, \emph{without justifying its biological appropriateness}. In spite of this, empirical evidence does support that certain stochasticity may exist in a few  models, e.g., the Lotka-Volterra model \cite{KK88}, the Brusselator model \cite{BFP10}, the glycolysis model \cite{JBP02}, the enzymatic model \cite{GMK17}, and metabolic pathways in general \cite{LH07}.

{\emph{Unless stated otherwise, within this section, all reaction systems are mass-action systems in both the stochastic sense and the deterministic sense.}}

It is plausible that one motivation of mathematical biology is to unveil the common mathematical law of similar biological models. Since it is arguably closer to the truth (than the negation of the following statement) that many phenomena in nature share the same generic principle or law, similar or even identical chemical mechanisms may exist in more than one forms in nature. For this consideration, {though} the assumption of low abundance which justifies the stochastic models seems contradictory to the assumption of high copy numbers of species which underlies deterministic reaction systems, the discussion about the stochastic {reaction systems sharing the same chemical mechanisms with the}  deterministic reaction systems in the references may not be entirely meaningless, even from the perspective of biology.

{As we will see below, Theorem~\ref{thm:1st-order-regularity} applies to all the models except the last, which demonstrates its wide applicability.}

\subsection*{{SIR epidemic model}}
Consider the following susceptible-infected-recovered (SIR) epidemic model \cite{GBK14} in terms of the following reaction network:
\begin{align*}
\cR\colon R\ce{->[$\kappa_{rs}$]}S\ce{<=>[$\gamma_s$][$\beta_s$]}0\ce{<=>[$\beta_i$][$\gamma_i$]}I\ce{->[$\kappa_{ir}$]}R\ce{->[$\gamma_r$]}0\quad S+I\ce{->[$\kappa_{si}$]}2I,
\end{align*}
where respectively, species $S$ stands for the susceptible, $I$ the infected, and $R$ the recovered compartment of individuals; $\beta_s$ and $\beta_i$ are the  birth rates of susceptibles and infected individuals; $\gamma_s$, $\gamma_i$, and $\gamma_r$ are the death rates of susceptibles, infected individuals, as well as recovered individuals; $\kappa_{si}$ is the infection rate, {$\kappa_{ir}$} is the recovery rate, and $\kappa_{rs}$ is the rate that a recovered individual loses immunity and immediately becomes susceptible again. {It is noteworthy that the growth rate is assumed to be constant as a consequence of demographics taken into consideration in epidemic models \cite{H20}.} 
{Since the total molecule counts of species will \emph{not} increase through any of second-order  reactions, it follows from Theorem~\ref{thm:1st-order-regularity}(i) that this SIR model as a mass-action system is regular independently of the choice of the positive rate constants.} 

\subsection*{Volterra model}
Consider the following reaction network based on 
the classical deterministic Volterra competition system \cite{V31} among $d$ ecological species:
\[
\cR\colon 0\ce{<=>[$\alpha_i$][$\delta_i$]}S_i\ce{->[$\beta_i$]} 2S_i\quad S_i+S_j\ce{->[$\gamma_{ij}$]}S_i,\quad i,j=1,\ldots,d,
\]
where for species $S_i$, $\alpha_i$ is the immigration rate, $\beta_i$ the reproduction rate, and $\delta_i$ the death/ emigration rate; $\gamma_{ij}$ is the competition rate between species $S_i$  and $S_j$ due to overpopulation. 
{By Theorem~\ref{thm:1st-order-regularity}(i), the Volterra model is a regular mass-action system.}
\medskip

\subsection*{Generalized Togashi-Kaneko model}
\! Consider the generalized Togashi-Kaneko (TK) model \cite{BKW20}:
\[
\cR\colon S_i+S_j\ce{->[$\kappa_{ij}$]}2S_j\quad S_i\ce{<=>[\kappa_i][\kappa_{-i}]}0 ,\quad i,j\in[d],\ i\neq j
\]
where the reaction rate constants are \emph{non-negative}, and reactions are removed whenever the  corresponding reaction rate constants are 
zero. It {has recently} been discovered that this model undergoes  \emph{discreteness-induced transitions} \cite{BKW20}. {Observe that $\cR$ is bimolecular,  and 
Theorem~\ref{thm:1st-order-regularity}(i) applies to this model, yielding that this generalized TK model as a mass-action system is regular.}

\subsection*{\textit{Brusselator reaction network}}
Let us revisit the Brusselator reaction network:
\[
\cR\colon 0\ce{<=>}S_1\ce{->} S_2\quad 2S_1+S_2\ce{->} 3S_1
\]
{We know from Example~\ref{ex:illustration-of-u-endotacticity} that $\cR$ is $(1,1)$-endotactic}. By
Theorem~\ref{thm:endotactic-regularity} ({or Theorem~\ref{thm:1st-order-regularity}(i)}), the {Brusselator mass-action system} is regular.

\subsection*{\textit{Glycolysis system}}
Consider the Glycolysis chemical system \cite{H67,FH74}
\[
\cR\colon 0\ce{->[\kappa_1]} S_1\quad S_1+S_2\ce{->[\kappa_2]} S_3+S_4\quad  S_3\ce{->[\kappa_3]} S_2\ce{<=>[\kappa_4][\kappa_5]}S_4+S_5\quad S_4+S_6\ce{->[\kappa_6]} S_7\ce{->[\kappa_7]} S_6
\]
{It is straightforward to verify that  Theorem~\ref{thm:1st-order-regularity}(i) is fulfilled again, and $\cR$ is a regular mass-action system.}

\subsection*{\textit{Enzyme reactions in a semi-open reactor}}
Consider an open enzymatic reaction network 
\cite[Section~4.2.2]{F19}:
\[
\cR\colon \begin{cases}
&S_1+S_3 \ce{<=>[\kappa_1][\kappa_2]} S_4\quad S_2+S_3 \ce{<=>[\kappa_3][\kappa_4]} S_5\\
&S_2+S_4 \ce{<=>[\kappa_5][\kappa_6]} S_6 \ce{<=>[\kappa_7][\kappa_8]}S_1+S_5\\
&S_6 \ce{->[\kappa_9]} S_3+S_7\\
&S_1\ce{<=>[\kappa_{10}][\kappa_{11}]}0\ce{<=>[\kappa_{12}][\kappa_{13}]}S_2\quad S_7\ce{->[\kappa_{14}]}0
\end{cases}
\]
where $S_1$ and $S_2$ are substrates, and $S_3$ stands for an enzyme. In the first step (reactions in the first line), both substrates $S_1$ and $S_2$ bind to the enzyme $S_3$ in respective binding sites to form two intermediate species $S_4$ and $S_5$. In the second step (reactions in the second line), $S_1$ binds reversibly to $S_5$, and similarly, $S_2$ binds reversibly to $S_4$. Once both substrates are bound to the enzyme in the form of the new intermediate species $S_6$, the product $S_7$ is formed (the reaction in the third line). The reactions in the last line refer to the influx of the substrates as well as the  efflux of the substrates and the product.

{Again one can apply Theorem~\ref{thm:1st-order-regularity}(i) to yield $\cR$  is a regular mass-action system.}

\subsection*{\textit{An open peroxidase-oxidase (PO) reaction system}}

Consider the following peroxidase-oxidase (PO) reaction in an open system whose {chemical} mechanism is given by the following termolecular reaction network \cite{OD78}:
\[
\cR\colon \begin{cases}
&S_1+S_2+S_3 \ce{->[\kappa_1]} 2 S_3\ce{->[\kappa_2]} 2 S_4\quad
 S_1+S_2+S_4 \ce{->[\kappa_3]} 2 S_3\\
&0\ce{->[\kappa_4]} S_3 \ce{->[\kappa_5]} S_5\quad
S_4 \ce{->[\kappa_6]} S_6\quad
S_1 \ce{<=>[\kappa_7][\kappa_8]}S_7 \quad
S_8 \ce{->[\kappa_9]} S_2
\end{cases}
\]

{Theorem~\ref{thm:1st-order-regularity}(i) yields that $\cR$  is a regular mass-action system.}
\subsection*{\textit{Autocatalytic bromate-bromide-cerium (III)\! reaction}}

It is well-known that oscillations of concentrations of species were observed in {the} Belousov-\\
\noindent Zhabotinskii reaction \cite{FKN72}. Consider the autocatalytic bromate-bromide-cerium (III) reaction process whose mechanism, as a simplification of the full mechanism given in \cite{FN74}, can be formulated as the following \emph{closed} reaction network:
\[
\cR\colon\! \begin{cases}
&S_1+S_2+2S_3\ce{->[\kappa_1]} S_4+S_5\ \  S_2+S_3+S_4\ce{->[\kappa_2]}2S_5\\
& S_1+S_3+S_4\ce{->[\kappa_3]}2S_6+S_7\ \ S_3+S_6+S_8\ce{->[\kappa_4]}S_4+S_9\ \ 
2S_4\ce{->[\kappa_5]}S_1+S_3+S_5,
\end{cases}
\]
where $S_1=\text{BrO}_3^-$, $S_2=\text{Br}^-$, $S_3=\text{H}^+$, $S_4=\text{HBrO}_2$, $S_5=\text{HOBr}$, $S_6=\text{BrO}_2^-$, $S_7=\text{H}_2\text{O}$, $S_8=\text{Ce}^{3+}$, and $S_9=\text{Ce}^{4+}$. It is easy to verify that $(2,2,1,3,3,2,2,1,1)$ is {a} conservation law.

To explain the oscillations of molar concentrations of species in {the} experiment, a reduced {chemical}  mechanism leading to an \emph{open}  deterministic {reaction} system was proposed in \cite{GF77}, where molecule counts of all other species but $S_2$, $S_4$, and $S_9$ are regarded as  constants. Hence the reduced open reaction network is given {as}:
\[
\widetilde{\cR}\colon
S_2\ce{->[\widetilde{\kappa}_1]} S_4\ce{->[\widetilde{\kappa}_3]}0\ce{<-[\widetilde{\kappa}_5]}2S_4\quad
S_2+S_4\ce{->[\widetilde{\kappa}_2]}0\ce{->[\widetilde{\kappa}_4]}S_4+S_9
\]
where the reaction rate constants with a tilde symbol are the new constants based on the assumption for the reduced reaction network.

{It is readily verified that Theorem~\ref{thm:1st-order-regularity}(i) applies to $\widetilde{\cR}$, and hence $\widetilde{\cR}$ is a regular mass-action system.} 

\begin{figure}[ht]
\begin{center}
\includegraphics[width=1\textwidth]{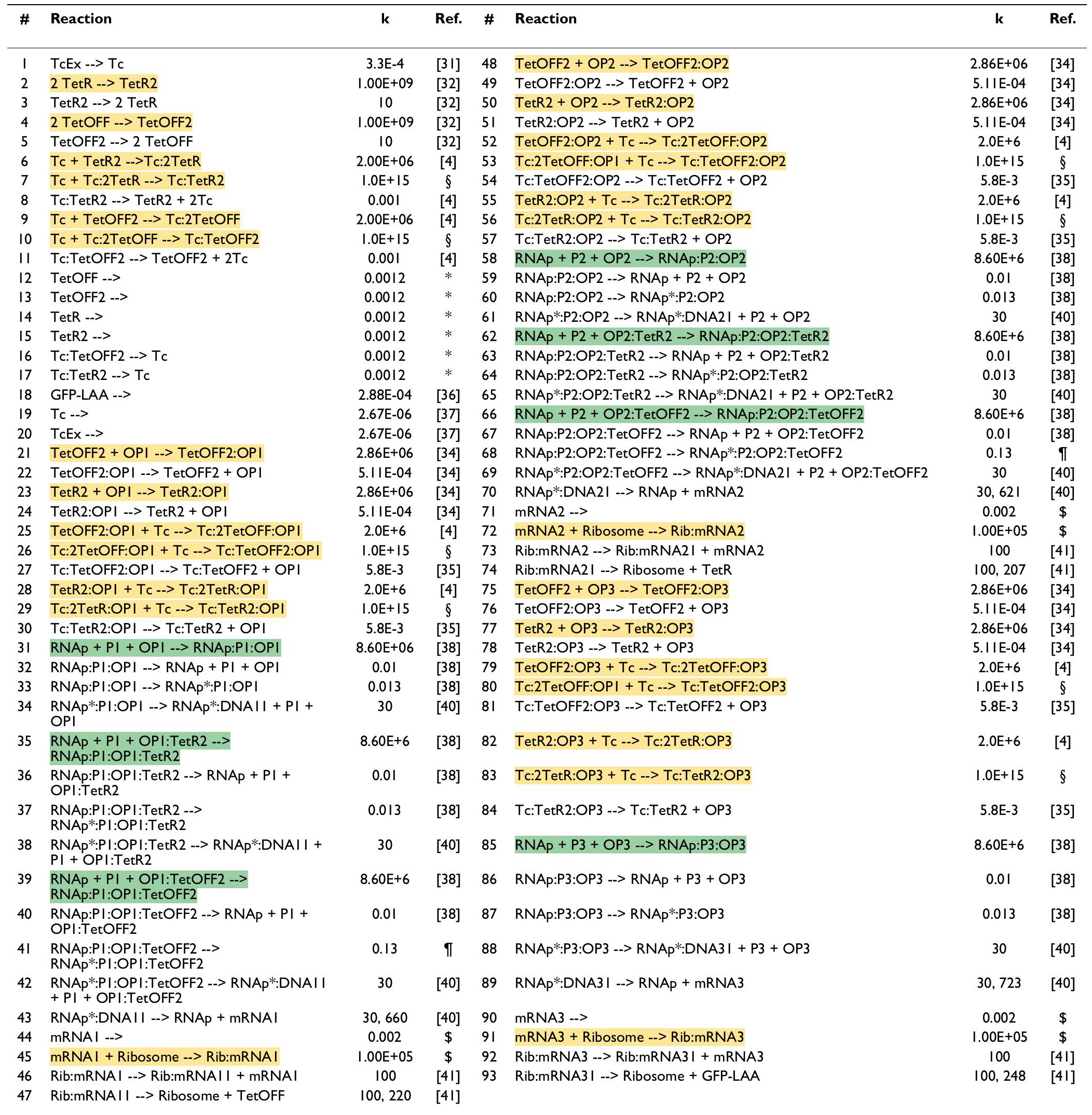}
\caption{Adapted from \cite[Table~1]{SK07}. The figure represents a synthetic tetracycline-inducible regulatory network of 93 reactions.  Bimolecular reactions are colored yellow and  termolecular reactions are colored  green. To read this table with more details, the reader is referred to \cite{SK07}.}
\end{center}\label{fig:93-reactions}
\end{figure}



\subsection*{{Circadian clock}}
Consider the circadian clock network {$\cR$} of 9 species and 16 reactions \cite{VKBL02}:
\[
{\begin{split}
&A\ce{->[$\delta_{A}$]}0\ce{<-[$\delta_{M_A}$]}M_A\ce{<-[$\beta_A$]} M_A + A,\, D_A\ce{->[$\alpha_A$]}D_A+M_A,\, D_A+A\ce{<=>[$\gamma_A$][$\theta_A$]}D_A'\ce{->[$\alpha_A'$]} D_A'+M_A\\
&R\ce{->[$\delta_{R}$]}0\ce{<-[$\delta_{M_R}$]}M_R\ce{<-[$\beta_R$]} M_R + R,\, D_R\ce{->[$\alpha_R$]}D_R+M_R,\, D_R+R\ce{<=>[$\gamma_A$][$\theta_R$]}D_R'\ce{->[$\alpha_R'$]} D_R'+M_R\\
&A+R\ce{->[$\gamma_C$]}C\ce{->[$\delta_A$]}R,
\end{split}
}\]
where $A$ is the activator protein, $R$ {is} the repressor protein, and $C$ is the protein formed by binding $A$ and $R$; $D_A$ and $D_A'$ are the activator promoters with and without $A$ bound to them, respectively;  $D_R$ and $D_R'$ are the activator promoters  with and without $R$ bound to them, respectively; $M_A$ and $M_R$ are mRNA of $A$ and $R$, respectively.

{It is straightforward to verify that 
Theorem~\ref{thm:1st-order-regularity}(i) applies, and thus the circadian clock network is a regular mass-action system.}

\subsection*{Synthetic tetracycline-inducible regulatory network}

Let us now continue to consider a more involved regulatory network $\cR$\---a synthetic tetracycline-inducible regulatory network \cite{SK07} given in Figure~\ref{fig:93-reactions}.  

In \cite{SK07}, not all reactions of the reaction network {are}  endowed with {mass-action kinetics}, as some  with $\S$ in the reference column of the table are adjusted for fast-reacting intermediates. Even if it might be possible to transform all reactions of non-mass-action kinetics into multi-step reactions of mass-action kinetics, we here prefer to save the effort while {doing a bit of} cheating by {\emph{assuming mass-action kinetics for all reactions of no more than one molecularity}.} 

One can carefully check the table given in Figure~\ref{fig:93-reactions} that it is a termolecular reaction network  consisting of 63 species and 93 distinct chemical reactions. In total, there are {57 unimolecular reactions, 27 bimolecular reactions, and 9 termolecular reactions}. {It is readily observed that all bimolecular and termolecular reactions are \emph{binding} reactions and thus do not increase total molecule counts. By Theorem~\ref{thm:1st-order-regularity}(i), we know that $\cR$ is a regular reaction system.}

\medskip

Below is an example where Theorem~\ref{thm:1st-order-regularity}(i) fails while Theorem~\ref{thm:1st-order-regularity}(ii) still applies.


\subsection*{{\textit{A generalized Lotka model}}}

{Consider the following model as a generalization of the classical Lotka model \cite{L20}:
\[\mathcal{R}\colon S_1\ce{->}2S_1\quad  mS_1+nS_2\ce{->}pS_1+rS_2\quad S_2\ce{->}0,\]
where $m>p$ and $n<r$ are all non-negative integers;  the first reaction stands for the birth of the predator ($S_2$), the last reaction stands for the natural death of the prey ($S_1$), and the reaction in the middle stands for predator-prey interaction. Here the general form of the second reaction accounts for the possibility that the unit mass of the prey and the predator  may not match \cite{MM07}: \emph{Either one can be much greater than the other}. For instance, at a macroscopic scale, a wolf ($S_2$) weighs heavier than a hare ($S_1$) and hence needs fewer molecules to conserve the total mass; while in contrast, at a microscopic scale, microbiome in the gut ($S_2$) may weigh much less than the host ($S_1$) and hence needs more molecules to conserve the total mass. In addition, the propensity function of the middle reaction may not obey the mass-action law. If $r-n>m-p>0$ (e.g., microbiome-host model), then $$\mathcal{R}_{\mathbf{1},+}=\{S_1\ce{->}2S_1, mS_1+nS_2\ce{->} pS_1+ rS_2\}$$ contains a higher order reaction and hence Theorem~\ref{thm:1st-order-regularity}(i) fails. Nevertheless, choose $u=(1,\epsilon)$ with $\epsilon<\frac{m-p}{r-n}$, then $\mathcal{R}_{u,+}=\{S_1\ce{->}2S_1\}$. Hence it follows from  Theorem~\ref{thm:1st-order-regularity}(ii) (with $S_1$ being the species that loses molecules in all higher than first order reactions)  that this generalized Lotka model is regular.}

{To close this section, We use the last example where Theorem~\ref{thm:1st-order-regularity} fails while one still could apply Theorem~\ref{thm:regularity} and Theorem~\ref{thm:regularity-ODE}  with some $u\neq\mathbf{1}$ to conclude the regularity.}

\subsection*{\textit{A CRN in natural computing}} {Consider the following reaction network $\cR$ which can deterministically compute a function \cite[Figure~2(b)]{CDS14}:
\[\begin{aligned}
\left.\begin{array}{l}X_1 \rightarrow X_1^g+X_1^h+X_1^\phi \\ X_2 \rightarrow X_2^g+X_2^h+X_2^\phi\end{array}\right\} & \text { make } 3 \text { copies of inputs for the } 3 \text { parallel computations }\\
\left.\begin{array}{l}X_1^g \rightarrow 2P^g \\ X_2^g \rightarrow C^g\end{array}\right\} & \text{ compute linear function } g(x_1, x_2)=2 x_1-x_2\\
X_2^h \rightarrow P^h & \text{ compute linear function } h(x_1, x_2)=x_2\\
\left.\begin{array}{l}X_1^\phi+F \rightarrow T \\ X_2^\phi+T \rightarrow F\end{array}\right\} & \text { compute semilinear predicate } \phi(x_1,x_2)=``x_1>x_2"?
\end{aligned}
\]
\[\begin{aligned} 
    \left.\begin{array}{l}T+Y+C^g \rightarrow T+\hat{C}^g \\ T+P^g \rightarrow T+Y+\hat{P}^g\end{array}\right\} & \begin{aligned} & \text { if predicate } \phi \text { true (ie } T \text {): } \\ & \text { every } P^g \text { increases } Y \text { and every } C^g \text { decreases } Y\end{aligned}
    \\
T+Y+\hat{P}^h \rightarrow T+P^h & \text{ undo the effect of } F
\\
    \left.\begin{array}{l}F+\hat{C}^g \rightarrow F+Y+C^g \\ F+Y+\hat{P}^g \rightarrow F+P^g\end{array}\right\} & \text { if predicate } \phi \text { false (ie } F \text{): undo the effect of } T\\ 
    F+P^h \rightarrow F+Y+\hat{P}^h & \text { every } P^h \text { increases } Y
    \end{aligned}
\]
where $X_i$ and their copies for $i=1,2$ stand for the input, $Y$ for output, and $P$ and their copies stand for ``produce'' and $C$ and their copies stand for ``consume''; $F$ and $T$ stand for ``false'' and ``true''. Among the 13 reactions, the sub network consisting of the last eight reactions which are bimolecular or termolecular, has a positive conservation law:\begin{equation}\label{eq:conservation_law}
\begin{split}&(X_1^g,X_2^g,X_1^h,X_2^h,X_1^{\phi},X_2^{\phi},C^g,\hat{C}^g,P^g,\hat{P}^g,P^h,\hat{P}^h,F,T,Y)\\
=& u=(1,1,1,1,1,2,1,2,2,1,2,1,2,1,1)
\end{split}
\end{equation}
It is easy to observe that $\mathcal{R}_{\mathbf{1},+}$ contains bimolecular reactions\---the 9th, 11th, and 13th reactions; moreover, there exists no species that is decreasing in all bimolecular and termolecular reactions. {Hence Theorem~\ref{thm:1st-order-regularity}(i)-(ii) fail to apply. However, since the sub network consisting of the last eight reactions is conservative (with a conservation law $u$ in \eqref{eq:conservation_law}) and its complement is a unimolecular sub reaction network, regularity of $\cR$ as a mass-action system follows from Theorem~\ref{thm:1st-order-regularity}(iii).}
\section{{Discussion}}

{\! While not necessary as evidenced by the generalized Lotka model, the condition (i)   in Theorem~\ref{thm:1st-order-regularity} (i.e., all reactions increasing the total molecule counts consist a unimolecular reaction network) is crucial for the regularity of the reaction network.}


\begin{example}\label{ex:run-away}
  {Consider the following single-species population model including bursty pair
reproduction \cite{BA16}:
    \[S\ce{->}0\quad 2S\ce{->}rS,\quad r=3,\ldots,M,\]
where $M>3$ is the truncation number for the sequence of reactions accounting for the bursty pair reproduction ($M=\infty$ in \cite{BA16}; see also \cite{XHW23}). 
    This reaction network is a reduced model of one modeling bursty
viral production from infected cells \cite{PKP11} or reproduction of mammals involving a varying number of offspring \cite{BA16}.  
   By \cite[Theorem~1]{XHW23}, this model as a stochastic mass-action system is explosive with probability one. Hence this model shows that a mass-action system can be irregular if there exists a higher (than first order) reaction that increases the total molecule counts of species.} 
\end{example}

{Indeed, many population processes like the SIR model or the Volterra model can be written as a reaction network composed of a sub network $\cR_1$ of birth and/or death processes which is a unimolecular mass-action system and a sub network $\cR_2$ of exchange of states (e.g., healthy/susceptible, infected, recovered) or dispersal (among different patches) of individuals of the population which has conservation law $\mathbf{1}$. Hence even if the kinetics of $\cR_2$ may not be the mass-action kinetics, Theorem~\ref{thm:1st-order-regularity} still applies.}

{Below we discuss the relevance of the main results to a concerned conjecture in CRNT.} A positive recurrence conjecture was proposed by Anderson and Kim in 2018 \cite{AK18}, which states that \emph{every weakly reversible stochastic mass-action system is positive recurrent} in each closed communicating class. Consistent progress has been made ever since \cite{AK18,AM18,ACK20,ACKN20,XHW23,WX20,AK23,LR23,KK24,FKY24}.
A classical tool to attack this problem
relies on the so-called Foster-Lyapunov criterion  \cite{MT93,MT09,MP14}. Primarily based on this criterion, continued efforts since \cite{AK18} have been made  towards this conjecture  \cite{AK18,ACK20,XHW23,WX20,AK23}.
Likely in the light of the aforementioned practical considerations from biochemistry, this conjecture was first proved with diverse novel techniques such as tier-sequence, for \emph{bimolecular systems} under various additional assumptions \cite{AK18,ACK20,ACKN20}. Moreover, since \emph{complex-balancing} \cite{H72} is also a pleasant property in biochemistry, it is interesting to know if a complex-balanced stochastic reaction system is  also positive recurrent  \cite{ACKK18}. It has been  proved that every complex-balanced stochastic mass-action system is regular (i.e., non-explosive in the stochastic regime) \cite[Theorem~3]{ACKK18}, {and hence is positive recurrent \cite[Theorem~3.5.3]{N98} in the light of the light tailed  Poisson stationary distribution \cite[Theorem~4.1]{ACK10}}.  This further implies that every weakly reversible \emph{deficiency zero} {stochastic mass-action system 
is} also positive recurrent on the grounds that {it is} complex-balanced \cite{H72}. As a consequence, {this}  conjecture was also confirmed for \emph{unimolecular}  systems  (i.e., every non-zero reactant is a copy of one species), since every unimolecular weakly reversible reaction network is of deficiency zero \cite{ACK10}.

Later, this positive recurrence conjecture was proved in one-dimension in  \cite{WX20}, the arguably simplest scenario, based on a handy criterion {recently established in \cite{XHW23} for positive recurrence of one-dimensional CTMCs of polynomial transition rates}. 

Nevertheless, in spite of all these efforts, the conjecture in the bimolecular case has yet to be closed. Theorem~\ref{thm:reducing-positive-recurrence-to-existence-of-SD} makes it possible to reduce this problem of proving positive recurrence for bimolecular weakly reversible stochastic mass-action systems to finding a stationary distribution for these systems, as illustrated by Example~\ref{ex:illustration-with-existence-of-SD}.
Hence, rather than constructing Lyapunov functions for positive recurrence in the sense of \cite[condition (CD2)]{MT93}, one may count on alternative techniques to prove the existence of stationary distributions \cite{ACK10,CJ18,HM19,AN19,H21,HKASK21,KTSB21,PH22,
CJ22,HH23,HHKK23,HWXia23,HWX23,WX23} to show positive recurrence for bimolecular systems.

{It is noteworthy that although} linear functions turn out to be Lyapunov functions for the regularity of second-order weakly reversible  stochastic mass-action systems, in general they fail to work for \emph{termolecular} (i.e., molecularity of each reaction is no more than three) weakly reversible stochastic mass-action systems, evidenced by  Example~\ref{ex:linear-function-fails-for-3rd-order-WR}. This implies that even if every weakly reversible stochastic mass-action system were indeed  regular \cite{AK18} {regardless of the order}, it could be challenging to construct a \emph{simple} Lyapunov function to prove the regularity of weakly reversible  mass-action systems of a higher order. {The simple proof of Theorem~\ref{thm:regularity} reveals that even if endotacticity of the reaction graph may not provide  \emph{the right} perspective, it may provide a good perspective to see why stochastic (mass-action) reaction systems of \emph{low order} should enjoy nice dynamics, as it does for deterministic reaction systems evidenced by  \cite{P13,CNP13,GMS14}.}


\section*{Acknowledgments}
{The author appreciates the valuable comments from three anonymous referees which greatly improves both the contents and the presentation of the manuscript. In particular, the author thanks an anonymous referee for an observation that eventually becomes Corollary~\ref{cor:one-species-interpretation}, and for proposing comparison of the regularity of deterministic reaction systems and that of stochastic reaction systems.}

\bibliographystyle{plain}
{\small\bibliography{references}}

@article{W72,
  title={{Dissipative dynamical systems part I: General theory}},
  author={Willems, Jan C},
  journal={Archive for Rational Mechanics and Analysis},
  volume={45},
  number={5},
  pages={321--351},
  year={1972},
  publisher={Springer}
}

@article{PKP11,
  title={{Stochastic theory of early viral infection: continuous versus burst production of virions}},
  author={Pearson, John E and Krapivsky, Paul and Perelson, Alan S},
  journal={PLoS computational biology},
  volume={7},
  number={2},
  pages={e1001058},
  year={2011},
  publisher={Public Library of Science San Francisco, USA}
}

@article{CDS14,
  title={{Deterministic function computation with chemical reaction networks}},
  author={Chen, Ho-Lin and Doty, David and Soloveichik, David},
  journal={Natural computing},
  volume={13},
  number={4},
  pages={517--534},
  year={2014},
  publisher={Springer}
}

@article{ADE18b,
  title={On the geometry of chemical reaction networks: Lyapunov function and large deviations},
  author={Agazzi, Andrea and Dembo, Amir and Eckmann, J-P},
  journal={Journal of Statistical Physics},
  volume={172},
  number={2},
  pages={321--352},
  year={2018},
  publisher={Springer}
}

@book{MM07,
  title={Theoretical Ecology: Principles and Applications},
  author={May, Robert and McLean, Angela R},
  year={2007},
  publisher={OUP Oxford}
}

@article{H20,
  title={{The mathematics of infectious diseases}},
  author={Hethcote, H.W.},
  journal={SIAM Review},
  volume={42},
  number={4},
  pages={599--653},
  year={2000},
  publisher={SIAM}
}

@article{BA16,
	Author = {Be'er, S. and Assaf, M.},
	Journal = {J. Stat. Mech--Theory E.},
	Pages = {113501},
	Title = {{Rare events in stochastic populations under bursty reproduction}},
	Volume = {},
	Year = {2016}}

@book{N98,
  title={{Markov Chains}},
  author={Norris, James R},
  series={{Cambridge Series in Statistical and Probabilistic Mathematics}},
  number={2},
  year={1998},
  publisher={Cambridge University Press}
}

@article{K70,
  title={{Solutions of ordinary differential equations as limits of pure jump Markov processes}},
  author={Kurtz, Thomas G},
  journal={Journal of applied Probability},
  volume={7},
  number={1},
  pages={49--58},
  year={1970},
  publisher={Cambridge University Press}
}

@article{K71,
  title={{Limit theorems for sequences of jump Markov processes approximating ordinary differential processes}},
  author={Kurtz, Thomas G},
  journal={Journal of Applied Probability},
  volume={8},
  number={2},
  pages={344--356},
  year={1971},
  publisher={Cambridge University Press}
}

@article{ACK10,
  title={{Product-form stationary distributions for deficiency zero chemical reaction networks}},
  author={Anderson, David F and Craciun, Gheorghe and Kurtz, Thomas G},
  journal={Bulletin of Mathematical Biology},
  volume={72},
  pages={1947--1970},
  year={2010},
  publisher={Springer}
}

@article{PKC12,
  title={{Global injectivity and multiple equilibria in uni-and bi-molecular reaction networks}},
  author={Pantea, Casian and Koeppl, Heinz and Craciun, Gheorghe},
  journal={Discrete Contin. Dyn. Syst. Ser. B},
  volume={17},
  number={6},
  pages={2153--2170},
  year={2012}
}

@article{F87,
  title={{Chemical reaction network structure and the stability of complex isothermal reactors-I. The deficiency zero and deficiency one theorems}},
  author={Feinberg, Martin},
  journal={Chemical Engineering Science},
  volume={42},
  number={10},
  pages={2229--2268},
  year={1987},
  publisher={Elsevier}
}

@article{ACK20,
  title={Stochastically modeled weakly reversible reaction networks with a single linkage class},
  author={Anderson, David F and Cappelletti, Daniele and Kim, Jinsu},
  journal={Journal of Applied Probability},
  volume={57},
  number={3},
  pages={792--810},
  year={2020},
  publisher={Cambridge University Press}
}

@article{LR23,
  title={{A scaling approach to stochastic chemical reaction networks}},
  author={Laurence, Lucie and Robert, Philippe},
  journal={arXiv:2310.01949},
  year={2023}
}

@article{AM18,
  title={{Seemingly stable chemical kinetics can be stable, marginally stable, or unstable}},
  author={Agazzi, Andrea and Mattingly, Jonathan C},
  journal={Communications in Mathematical Sciences},
  volume={18},
  number={6},
  pages={1605--1642},
  year={2018},
  publisher={Elsevier}
}

@article{GBK14,
  title={{A scalable computational framework for establishing long-term behavior of stochastic reaction networks}},
  author={Gupta, Ankit and Briat, Corentin and Khammash, Mustafa},
  journal={PLoS Computational Biology},
  volume={10},
  number={6},
  pages={e1003669},
  year={2014},
  publisher={Public Library of Science San Francisco, USA}
}

@article{KK24,
  title={{A path method for non-exponential ergodicity of Markov chains and its application for chemical reaction systems}},
  author={Kim, Minjoon and Kim, Jinsu},
  journal={arXiv:2402.05343},
  year={2024}
}

@article{AK18,
  title={{Some network conditions for positive recurrence of stochastically modeled reaction networks}},
  author={Anderson, David F and Kim, Jinsu},
  journal={SIAM Journal on Applied Mathematics},
  volume={78},
  number={5},
  pages={2692--2713},
  year={2018},
  publisher={SIAM}
}

@article{AK23,
  title={{Mixing times for two classes of stochastically modeled reaction networks}},
  author={Anderson, David F and Kim, Jinsu},
  journal={Mathematical Biosciences and Engineering},
  volume={20},
  number={3},
  pages={4690-4713},
  year={2023},
  publisher={AIMS}
}

@article{GMS14,
  title={{A geometric approach to the global attractor conjecture}},
  author={Gopalkrishnan, Manoj and Miller, Ezra and Shiu, Anne},
  journal={SIAM Journal on Applied Dynamical Systems},
  volume={13},
  number={2},
  pages={758--797},
  year={2014},
  publisher={SIAM}
}

@article{PCK14,
  title={{Dynamical properties of discrete reaction networks}},
  author={Paulev{\'e}, Lo{\"\i}c and Craciun, Gheorghe and Koeppl, Heinz},
  journal={Journal of Mathematical Biology},
  volume={69},
  pages={55--72},
  year={2014},
  publisher={Springer}
}

@article{GF77,
  title={{Three steady state situation in an open chemical reaction system. I}},
  author={Geiseler, W and F{\"o}llner, H H},
  journal={Biophysical Chemistry},
  volume={6},
  number={2},
  pages={107--115},
  year={1977},
  publisher={Elsevier}
}

@article{OD78,
  title={{Oscillatory kinetics of the peroxidase-oxidase reaction in an open system. Experimental and theoretical studies}},
  author={Olsen, Lars F and Degn, Hans},
  journal={Biochimica et Biophysica Acta (BBA)-Enzymology},
  volume={523},
  number={2},
  pages={321--334},
  year={1978},
  publisher={Elsevier}
}

@article{BKW20,
  title={{Stationary distributions of systems with discreteness-induced transitions}},
  author={Bibbona, Enrico and Kim, Jinsu and Wiuf, Carsten},
  journal={Journal of The Royal Society Interface},
  volume={17},
  number={168},
  pages={20200243},
  year={2020},
  publisher={The Royal Society}
}

@article{HKASK21,
  title={Derivation of stationary distributions of biochemical reaction networks via structure transformation},
  author={Hong, Hyukpyo and Kim, Jinsu and Ali Al-Radhawi, M and Sontag, Eduardo D and Kim, Jae K},
  journal={Communications Biology},
  volume={4},
  number={1},
  pages={620},
  year={2021},
  publisher={Nature Publishing Group UK London}
}

@article{HHKK23,
  title={{Computational translation framework identifies biochemical reaction networks with special topologies and their long-term dynamics}},
  author={Hong, Hyukpyo and Hernandez, Bryan S and Kim, Jinsu and Kim, Jae K},
  journal={SIAM Journal on Applied Mathematics},
  volume={83},
  number={3},
  pages={1025--1048},
  year={2023},
  publisher={SIAM}
}

@article{MFS22,
  title={{What makes a reaction network ``chemical''?}},
  author={M{\"u}ller, Stefan and Flamm, Christoph and Stadler, Peter F},
  journal={Journal of Cheminformatics},
  volume={14},
  number={1},
  pages={63},
  year={2022},
  publisher={Springer}
}

@article{VKBL02,
  title={Mechanisms of noise-resistance in genetic oscillators},
  author={Vilar, Jos{\'e} MG and Kueh, Hao Y and Barkai, Naama and Leibler, Stanislas},
  journal={Proceedings of the National Academy of Sciences},
  volume={99},
  number={9},
  pages={5988--5992},
  year={2002},
  publisher={National Acad Sciences}
}

@article{HH23,
  title={{Squeezing stationary distributions of stochastic chemical reaction systems}},
  author={Hirono, Yuji and Hanai, Ryo},
  journal={Journal of Statistical Physics},
  volume={190},
  number={4},
  pages={86},
  year={2023},
  publisher={Springer}
}

@article{FH74,
  title={{Dynamics of open chemical systems and the algebraic structure of the underlying reaction network}},
  author={Feinberg, Martin and Horn, Friedrich JM},
  journal={Chemical Engineering Science},
  volume={29},
  number={3},
  pages={775--787},
  year={1974},
  publisher={Elsevier}
}

@article{H67,
  title={{The theory of oscillating reactions-kinetics symposium}},
  author={Higgins, Joseph},
  journal={Industrial \& Engineering Chemistry},
  volume={59},
  number={5},
  pages={18--62},
  year={1967},
  publisher={ACS Publications}
}

@article{ACKN20,
  title={{Tier structure of strongly endotactic reaction networks}},
  author={Anderson, David F and Cappelletti, Daniele and Kim, Jinsu and Nguyen, Tung D},
  journal={Stochastic Processes and their Applications},
  volume={130},
  number={12},
  pages={7218--7259},
  year={2020},
  publisher={Elsevier}
}

@book{F19,
  title={{Foundations of Chemical Reaction Network Theory}},
  author={Feinberg, Martin},
  year={2019},
  publisher={Springer Science \& Business Media}
}

@article{CNP13,
  title={{Persistence and permanence of mass-action and power-law dynamical systems}},
  author={Craciun, Gheorghe and Nazarov, Fedor and Pantea, Casian},
  journal={SIAM Journal on Applied Mathematics},
  volume={73},
  number={1},
  pages={305--329},
  year={2013},
  publisher={SIAM}
}

@article{P12,
  title={On the persistence and global stability of mass-action systems},
  author={Pantea, Casian},
  journal={SIAM Journal on Mathematical Analysis},
  volume={44},
  number={3},
  pages={1636--1673},
  year={2012},
  publisher={SIAM}
}

@article{H72,
  title={{Necessary and sufficient conditions for complex balancing in chemical kinetics}},
  author={Horn, Fritz},
  journal={Archive for Rational Mechanics and Analysis},
  volume={49},
number={3},
  pages={172--186},
  year={1972},
  publisher={Springer}
}

@article{H21,
  title={Stationary distributions via decomposition of stochastic reaction networks},
  author={Hoessly, Linard},
  journal={Journal of Mathematical Biology},
  volume={82},
  number={7},
  pages={67},
  year={2021},
  publisher={Springer}
}

@book{MT09,
  title={{Markov Chains and Stochastic Stability}},
  author={Meyn, Sean P and Tweedie, Richard L},
  year={2009},
  publisher={Springer Science \& Business Media}
}

@article{MT93,
  title={{Stability of Markovian processes III: Foster--Lyapunov criteria for continuous-time processes}},
  author={Meyn, Sean P and Tweedie, Richard L},
  journal={Advances in Applied Probability},
  volume={25},
  number={3},
  pages={518--548},
  year={1993},
  publisher={Cambridge University Press}
}

@book{EK09,
  title={{Markov Processes: Characterization and Convergence}},
  author={Ethier, Stewart N and Kurtz, Thomas G},
  year={2009},
  publisher={John Wiley \& Sons}
}

@article{AN19,
  title={{Results on stochastic reaction networks with non-mass action kinetics}},
  author={Anderson, David F and Nguyen, Tung D},
  journal={Mathematical Biosciences and Engineering},
  volume={16},
  pages={2118--2140},
  year={2019}
}

@article{KTSB21,
  title={{Stationary distributions of continuous-time Markov chains: a review of theory and truncation-based approximations}},
  author={Kuntz, Juan and Thomas, Philipp and Stan, Guy-Bart and Barahona, Mauricio},
  journal={SIAM Review},
  volume={63},
  number={1},
  pages={3--64},
  year={2021},
  publisher={SIAM}
}

@incollection{AK11,
  title={{Continuous time Markov chain models for chemical reaction networks}},
  author={Anderson, David F and Kurtz, Thomas G},
  booktitle={Design and analysis of biomolecular circuits: engineering approaches to systems and synthetic biology},
  pages={3--42},
  year={2011},
  publisher={Springer}
}

@article{L20,
  title={{Undamped oscillations derived from the law of mass action}},
  author={Lotka, Alfred J},
  journal={{Journal of the American Chemical Society}},
  volume={42},
  number={8},
  pages={1595--1599},
  year={1920},
  publisher={ACS Publications}
}

@book{V31,
  title={Th{\'e}orie math{\'e}matique de la lutte pour la vie},
  author={Volterra, Vito},
  year={1931},
  publisher={Gauthiers-Villars}
}

@article{FMKT06,
  title={Modeling the kinetics of bimolecular reactions},
  author={Fern{\'a}ndez-Ramos, Antonio and Miller, James A and Klippenstein, Stephen J and Truhlar, Donald G},
  journal={Chemical Reviews},
  volume={106},
  number={11},
  pages={4518--4584},
  year={2006},
  publisher={ACS Publications}
}

@book{C06,
  title={{General Chemistry: The Essenstial Concepts}},
  author={Chang, Raymond},
  year={2006},
  edition={5th},
  publisher={McGraw-Hill}
}

@article{KK88,
  title={{Kinetics of bimolecular reactions in condensed media: critical phenomena and microscopic self-organisation}},
  author={Kuzovkov, V. and Kotomin, E.},
  journal={Reports on Progress in Physics},
  volume={51},
  number={12},
  pages={1479},
  year={1988},
  publisher={IOP Publishing}
}

@article{BFP10,
  title={{Stochastic Turing patterns in the Brusselator model}},
  author={Biancalani, Tommaso and Fanelli, Duccio and Di Patti, Francesca},
  journal={Physical Review E-Statistical, Nonlinear, and Soft Matter Physics},
  volume={81},
  number={4},
  pages={046215},
  year={2010},
  publisher={APS}
}

@article{JBP02,
  title={{Futile cycles revisited: a Markov chain model of simultaneous glycolysis and gluconeogenesis}},
  author={Jones, M E and Berry, M N and Phillips, J W},
  journal={Journal of theoretical biology},
  volume={217},
  number={4},
  pages={509--523},
  year={2002},
  publisher={Elsevier}
}

@article{LH07,
  title={Stochastic fluctuations in metabolic pathways},
  author={Levine, Erel and Hwa, Terence},
  journal={{Proceedings of the National Academy of Sciences}},
  volume={104},
  number={22},
  pages={9224--9229},
  year={2007},
  publisher={National Acad Sciences}
}

@article{GMK17,
  title={{Dynamic disorder in simple enzymatic reactions induces stochastic amplification of substrate}},
  author={Gupta, Ankit and Milias-Argeitis, Andreas and Khammash, Mustafa},
  journal={Journal of The Royal Society Interface},
  volume={14},
  number={132},
  pages={20170311},
  year={2017},
  publisher={The Royal Society}
}

@article{SK07,
  title={{Synthetic tetracycline-inducible regulatory networks: computer-aided design of dynamic phenotypes}},
  author={Sotiropoulos, Vassilios and Kaznessis, Yiannis N},
  journal={BMC systems biology},
  volume={1},
  pages={1--18},
  year={2007},
  publisher={Springer}
}

@article{M67,
  title={{Stochastic approach to chemical kinetics}},
  author={McQuarrie, Donald A},
  journal={{Journal of Applied Probability}},
  volume={4},
  number={3},
  pages={413--478},
  year={1967},
  publisher={Cambridge University Press}
}

@article{WX20,
  title={Classification and threshold dynamics of stochastic reaction networks},
  author={Wiuf, Carsten and Xu, Chuang},
  journal={arXiv:2012.07954},
  year={2020}
}

@article{HWX23,
  title={{Stationary measures of continuous time Markov chains with applications to stochastic reaction networks}},
  author={Hansen, Mads C and Wiuf, Carsten and Xu, Chuang},
  journal={arXiv:2312.06186},
  year={2023}
}

@article{WX23,
  title={Any stochastic reaction network has a stationary measure},
  author={Wiuf, Carsten and Xu, Chuang},
  journal={arXiv:2312.07590},
  year={2023}
}

@article{ACKK18,
  title={{Non-explosivity of stochastically modeled reaction networks that are complex balanced}},
  author={Anderson, David F and Cappelletti, Daniele and Koyama, Masanori and Kurtz, Thomas G},
  journal={Bulletin of Mathematical Biology},
  volume={80},
  pages={2561--2579},
  year={2018},
  publisher={Springer}
}

@book{P13,
  title={{Differential Equations and Dynamical Systems}},
  author={Perko, Lawrence},
  volume={7},
  year={2013},
  publisher={Springer Science \& Business Media}
}

@article{MP14,
  title={{Explosion, implosion, and moments of passage times for continuous-time Markov chains: a semimartingale approach}},
  author={Menshikov, Mikhail and Petritis, Dimitri},
  journal={Stochastic Processes and Their Applications},
  volume={124},
  number={7},
  pages={2388--2414},
  year={2014},
  publisher={Elsevier}
}

@article{XHW22,
  title={{Structural classification of continuous time Markov chains with applications}},
  author={Xu, Chuang and Hansen, Mads C and Wiuf, Carsten},
  journal={Stochastics},
  volume={94},
  number={7},
  pages={1003--1030},
  year={2022},
  publisher={Taylor \& Francis}
}

@article{XHW23,
  title={{Full classification of dynamics for one-dimensional continuous-time Markov chains with polynomial transition rates}},
  author={Xu, Chuang and Hansen, Mads C and Wiuf, Carsten},
  journal={Advances in Applied Probability},
  volume={55},
  number={1},
  pages={321--355},
  year={2023},
  publisher={Cambridge University Press}
}

@article{CJ18,
  title={{Graphically balanced equilibria and stationary measures of reaction networks}},
  author={Cappelletti, Daniele and Joshi, Badal},
  journal={SIAM Journal on Applied Dynamical Systems},
  volume={17},
  number={3},
  pages={2146--2175},
  year={2018},
  publisher={SIAM}
}

@article{CJ22,
  title={{Transition graph decomposition for complex balanced reaction networks with non-mass-action kinetics}},
  author={Cappelletti, Daniele and Joshi, Badal},
  journal={Mathematical Biosciences and Engineering},
  volume={19},
  number={8},
  pages={7649--7668},
  year={2022},
  publisher={AIMS}
}

@article{HM19,
  title={{Stationary distributions and condensation in autocatalytic reaction networks}},
  author={Hoessly, Linard and Mazza, Christian},
  journal={SIAM Journal on Applied Mathematics},
  volume={79},
  number={4},
  pages={1173--1196},
  year={2019},
  publisher={SIAM}
}

@article{PH22,
  title={{An algebraic approach to product-form stationary distributions for some reaction networks}},
  author={Pascual-Escudero, Beatriz and Hoessly, Linard},
  journal={SIAM Journal on Applied Dynamical Systems},
  volume={21},
  number={1},
  pages={588--615},
  year={2022},
  publisher={SIAM}
}

@article{X24a,
  title={Global stability of first order endotactic reaction systems},
  author={Xu, Chuang},
  journal={arXiv:2409.01598},
  year={2024}
}

@article{FN74,
  title={{Oscillations in chemical systems. IV. Limit cycle behavior in a model of a real chemical reaction}},
  author={Field, Richard J and Noyes, Richard M},
  journal={The Journal of Chemical Physics},
  volume={60},
  number={5},
  pages={1877--1884},
  year={1974},
  publisher={AIP Publishing}
}

@article{FKN72,
  title={{Oscillations in chemical systems. II. Thorough analysis of temporal oscillation in the bromate-cerium-malonic acid system}},
  author={Field, Richard J and K\"{o}r\"{o}s, Endre and Noyes, Richard M},
  journal={Journal of the American Chemical Society},
  volume={94},
  number={25},
  pages={8649--8664},
  year={1972},
  publisher={ACS Publications}
}

@article{FKY24,
  title={{Boundary-induced slow mixing for Markov chains and its application to stochastic reaction networks}},
  author={Fan, Wai-Tong L. and Kim, Jinsu and Yuan, Chaojie},
  journal={arXiv:2407.12166},
  year={2024}
}

@article{HWXia23,
  title={{Complex balanced distributions for chemical reaction networks}},
  author={Hoessly, Linard and Wiuf, Carsten and Xia, Panqiu},
  journal={arXiv:2301.04091},
  year={2023}
}

\end{document}